\newcommand{\bN}{\mathbb N}
\newcommand{\bQ}{\mathbb Q}
\newcommand{\cA}{\mathcal A}
\newcommand{\cB}{\mathcal B}
\newcommand{\cI}{\mathcal I}
\newcommand{\cL}{\mathcal L}
\newcommand{\cS}{\mathcal S}
\newcommand{\cM}{\mathcal M}
\newcommand{\CTreg}{\mathrm{CT_{reg}}}
\newcommand{\dom}{\mathrm{dom}}
\newcommand{\lex}{\mathrm{lex}}
\newcommand{\llex}{\mathrm{llex}}
\newcommand{\pref}{\mathrm{pref}}
\newcommand{\rest}{\mathord\restriction}
\newcommand{\Run}{\mathrm{Run}}
\newcommand{\trees}{\mathrm{trees}}
\newcommand{\VD}{\mathrm{VD}}
\newcommand{\wt}{\mathrm{wt}}
\newcommand{\word}{\mathrm{word}}
\title{Isomorphisms of scattered automatic linear orders}
\author{Dietrich Kuske} 
\institute{Institut f\"ur Theoretische Informatik, Technische
  Universit\"at Ilmenau, Germany}
\begin{document}
\maketitle

\begin{abstract}
  We prove that the isomorphism of scattered tree automatic linear
  orders as well as the existence of automorphisms of scattered word
  automatic linear orders are undecidable. For the existence of
  automatic automorphisms of word automatic linear orders, we
  determine the exact level of undecidability in the arithmetical
  hierarchy.
\end{abstract}

\section{Introduction}

Automatic structures form a class of computable structures with much
better algorithmic properties: while, due to Rice's theorem, nothing is
decidable about a computable structure (given as a tuple of Turing
machines), validity of first-order sentences is decidable in automatic
structures (given as a tuple of finite automata). This property of
automatic structures was first observed and exploited in concrete
settings by B\"uchi, by Elgot~\cite{Elg61}, and by Epstein et
al.~\cite{EpsCHLPT92}. Hodgson~\cite{Hod82} attempted a uniform
treatment, but the systematic study really started with the work by
Khoussainov and Nerode~\cite{KhoN95} and by Blumensath and
Gr\"adel~\cite{BluG00,BluG04}. Over the last decade, a fair amount of
results have been obtained, see e.g.\ the surveys \cite{Rub08,BarGR11}
as well as the list of open questions~\cite{KhoN08}, for very recent
results not covered by the mentioned articles, see e.g.\
\cite{BraS11,DurH12,JaiKSS12,Hus12}.

A rather basic question about two automatic structures is whether they
are isomorphic. For ordinals and Boolean algebras, this problem was
shown to be decidable via a characterisation of the automatic members
of these classes of structures. On the other hand, already Blumensath
and Gr\"adel~\cite{BluG04} observed that this problem is undecidable
in general. In \cite{KhoNRS07}, it is shown that the isomorphism
problem is $\Sigma^1_1$-complete; a direct interpretation yields the
same result for successor trees, for undirected graphs, for
commutative monoids, for partial orders (of height 2), for lattices
(of height 4)~\cite{Nie07}. Rubin \cite{Rub04} shows that the
isomorphism problem for locally finite graphs is complete
for~$\Pi^0_3$. In \cite{KusLL11}, we show in particular that also the
isomorphism problems of order trees and of linear orders are
$\Sigma^1_1$-complete. For the handling of linear orders, our
arguments rely heavily on ``shuffle sums''. Consequently, we construct
linear orders that contain a copy of the rational line (a linear order
not containing the rational line is called scattered, i.e., our result
is show for non-scattered linear orders). This is unavoidable since we
also show that the isomorphism problem for scattered linear orders is
reducible to true arithmetic (i.e., the first-order theory of
$(\bN,+,\cdot)$) and therefore much ``simpler'' than the isomorphism
problem for arbitrary linear orders. But it is still conceivable that
the isomorphism problem for scattered linear orders is decidable.

In this paper, we deal with automatic scattered linear orders. In
particular, we prove the following three results:
\begin{enumerate}[(1)]
\item There is a scattered linear order whose set of tree-automatic
  presentations is $\Pi^0_1$-hard (i.e.\ one can reduce the complement
  of the halting problem to this problem). This holds even if we fix
  the order relation on the set of all trees
  (Theorem~\ref{T-Iso-tree-automatic}). Hence also the isomorphism
  problem for tree automatic scattered linear orders is $\Pi^0_1$-hard
  (Corollary~\ref{C-Iso-tree-automatic}).
\item The existence of a non-trivial automorphism of an automatic
  scattered linear order is $\Sigma^0_1$-hard (i.e.\ the halting
  problem reduces to this problem,
  Corollary~\ref{C-rigid-automatic}). Again, this holds even if we fix
  the linear order on the set of all words
  (Theorem~\ref{T-rigid-automatic}). The existence of an automatic
  non-trivial automorphism is $\Sigma^0_1$-complete.

  For regular languages ordered lexicographically, the existence of a
  non-trivial automorphism is decidable
  (Theorem~\ref{T-regular-rigid}), but it becomes undecidable for
  contextfree languages (Theorem~\ref{T-rigid-cf}).
\item The existence of a non-trivial automorphism of a tree automatic
  scattered linear order is $\Sigma^0_2$-hard (i.e., one can reduce
  the set of Turing machines that accept a finite language to this
  problem, Theorem~\ref{T-rigid-tree-automatic}).
\end{enumerate}
The proof of (2) uses an encoding of polynomials similarly
to~\cite{KusLL11} but avoids the use of shuffle sums. The technique
for proving (1) and (3) is genuinely new: One can understand a
weighted automaton over the semiring $(\bN\cup\{-\infty\};\max,+)$ as
a classical automaton with a partition of the set of transitions into
two sets~$T_0$ and~$T_1$. The behavior of such a weighted automaton
assigns numbers to words~$w$, namely the maximal number of transitions
from $T_1$ in an accepting run on the word $w$. Krob \cite{Kro94}
showed that the equivalence problem for such weighted automata is
$\Pi^0_1$-complete. The hardness results from (1) are based on a
sharpening of Krob's result (see \cite{DroK}): there is a fixed
weighted automata such that the set of equivalent weighted automata is
$\Pi^0_1$-hard (and therefore undecidable). A closer analysis of this
proof, together with the techniques for proving (1) and (2), finally
yields~(3).

These results show that the existence of isomorphisms and of
automorphisms is nontrivial for scattered linear orders that are
described by word and tree automata, resp.

\section{Preliminaries}

\subsection{Tree and word automatic structures}

Let $\Sigma$ be some alphabet.  A \emph{$\Sigma$-tree} or just a
\emph{tree} is a partial mapping $t:\{0,1\}^*\to\Sigma$ such that
$uv\in\dom(t)$ implies $u\in\dom(t)$, and $u1\in\dom(t)$ implies
$u0\in\dom(t)$ (note that we allow the empty tree $\emptyset$ with
$\dom(\emptyset)=\emptyset$). A \emph{(bottom up) tree automaton} is a
tuple $\cA=(Q,\iota,\Delta,F)$ where $Q$ is a finite set of
\emph{states}, $\iota$ is the initial state, $\Delta\subseteq
Q\times\Sigma \times Q^2$ is the \emph{transition relation}, and
$F\subseteq Q$ is the set of \emph{final states}. A \emph{run} of the
tree automaton $\cA$ on the tree $t$ is a mapping $\rho:\dom(t)\to Q$
such that
\[
  (\rho(u),t(u),\rho'(u0),\rho'(u1))\in\Delta\text{ with }
  \rho'(v)=
  \begin{cases}
   \rho(v) & \text{ for } v\in\dom(t)\\
   \iota & \text{ otherwise} 
  \end{cases}
\]
holds for all $u\in\dom(t)$. The run $\rho$ is \emph{accepting} if
$\rho(\varepsilon)\in F$. The \emph{language of the tree
  automaton~$\cA$} is the set $L(\cA)$ of all trees $t$ that admit an
accepting run of~$\cA$ on~$t$. A set~$L$ of trees is \emph{regular} if
there exists a tree automaton $\cA$ with $L(\cA)=L$.

It is convenient to understand a \emph{word} as a tree $t$ with
$\dom(t)\subseteq 0^*$ (then $t(\varepsilon)$ is the first letter of
the word). Nevertheless, we will use standard notation for words like
$uv$ for the concatenation or $\varepsilon$ for the empty word. A
\emph{word automaton} is a tree automaton $\cA=(Q,\iota,\Delta,F)$
with
\[
  (q,a,p_0,p_1)\in\Delta\ \Longrightarrow\ 
    p_1=\iota\text{ and }q\neq\iota\,.
\]
This condition ensures that word automata accept words, only.

Let $t_1,\dots,t_n$ be trees and let $\#\notin\Sigma$.  Then
$\Sigma_\#=\Sigma\cup\{\#\}$ and the \emph{convolution}
$\otimes(t_1,t_2,\dots,t_n)$ or $t_1\otimes t_2\otimes\cdots\otimes
t_n$ is the $\Sigma_\#^n$-tree $t$ with $\dom(t)=\bigcup_{1\le i\le
  n}\dom(t)$ and
\[
  t(u)=(t_1'(u),t_2'(u),\dots,t_n'(u)) \text{ with }
  t_i'(u)=
  \begin{cases}
    t_i(u) & \text{ if }u\in\dom(t_i)\\
    \# & \text{ otherwise.}
  \end{cases}
\]
Note that the convolution of a tuple of words is a word, again. For an
$n$-ary relation $R$ on the set of all trees, we write $R^\otimes$ for
the set of convolutions $\otimes(t_1,\dots,t_n)$ with
$(t_1,\dots,t_n)\in R$. A relation $R$ on trees is \emph{automatic} if
$R^\otimes$ is a regular tree language.

A relational structure $\cS=(L;R_1,\dots,R_k)$ is \emph{tree
  automatic} if the tree languages $L$ and $R_i^\otimes$ for $1\le
i\le k$ are regular; it is \emph{word automatic} if, in addition, $L$
is a word language. A tuple of tree automata accepting $L$ and
$R_i^\otimes$ for $1\le i\le k$ is called a \emph{tree or word
  automatic presentation} of the structure~$\cS$.

\subsection{Linear orders}

For words $u$ and $v$, we write $u\le_\pref v$ if $u$ is a prefix
of~$v$. Let $\Sigma$ be some set linearly ordered by~$\le$. Then
$\le_\lex$ denotes the lexicographic order on the set of
words~$\Sigma^*$: $u\le_\lex v$ if $u\le_\pref v$ or there are
$x,y,z\in\Sigma^*$, $a,b\in\Sigma$ with $u=xay$, $v=xbz$, and
$a<b$. From the lexicographic order on~$\Sigma^*$, we derive a linear
order (denoted $\le_\lex^2$) on the set $\Sigma^*\otimes\Sigma^*$ of
convolutions of words by
\[
  u\otimes v \le_\lex^2 u'\otimes v'
  \mathrel{:\Leftrightarrow} u<_\lex u' \text{ or }u=u', v\le_\lex v'\,.
\]

By $\le_\llex$, we denote the \emph{length-lexicographic order}
defined by $u\le_\llex v$ if $|u|<|v|$ or $|u|=|v|$ and $u\le_\lex
v$. We next extend this linear order $\le_\llex$ to trees. Let $t$ be
a tree. Then $t\rest_{0^*}$ (more precisely,
$t\rest_{(0^*\cap\dom(t))}$) is a word that can be understood as the
``main branch'' of the tree~$t$. For $u\in\{0,1\}^*$, let $t\rest_u$
denote the subtree of $t$ rooted at~$u$ (i.e., $\dom(t\rest_u)=\{v\mid
uv\in\dom(t)\}$ and $t\rest_u(v)=t(uv)$ for $u\in\{0,1\}^*$ as well as
$t\rest_u=\emptyset$ for $u\notin\dom(t)$). Furthermore, $\tau(t)$ is
the tuple of ``side trees'' of~$t$, namely
\[
  \tau(t)=(t\rest_{0^i1})_{0^i\in\dom(t)}\,.
\]
We now define the extension $\le_\trees$ of $\le_\llex$ to trees
setting $s<_\trees t$ if and only if
\begin{itemize}
\item $s$ is the empty tree or
\item $s\rest_{0^*} <_\llex t\rest_{0^*}$ or
\item $s\rest_{0^*} = t\rest_{0^*}$ and there exists $i$ (with
  $0^i\in\dom(s)$) such that $s\rest_{0^j1}=t\rest_{0^j1}$ for all
  $0\le j<i$ and $s\rest_{0^i1} <_\trees t\rest_{0^i1}$.
\end{itemize}
In other words, we first compare the main branches of the trees~$s$
and $t$ length-lexicographically and, if they are equal, compare the
tuples $\tau(s)$ and $\tau(t)$ (length-)lexicographically (based on
the extension~$\le_\trees$ of the length-lexicographic order to
trees). Since the ``side trees''~$t\rest_{0^j1}$ of any tree~$t$ are
properly smaller than the tree itself, the relation $\le_\trees$ is
well-defined. Note that all the order relations $\le_\pref$,
$\le_\lex$, $\le_\lex^2$, $\le_\llex$, and $\le_\trees$ are automatic.
\medskip

A linear order $\cL$ is \emph{scattered} if there is no embedding of
the rational line $(\bQ;\le)$ into~$\cL$. Examples of scattered linear
orders are the linear order of the non-negative integers~$\omega$, of
the non-positive integers~$\omega^*$, or the linear order of
size~$n\in\bN$ that we denote~$\underline{n}$. If $\Sigma$ is an
alphabet with at least~2 letters, then
$(\Sigma^*;\le_\llex)\cong\omega$ is scattered, too. On the other
hand, if $a,b\in\Sigma$ are distinct letters, then
$(\{aa,bb\}^*ab;\le_\lex)\cong(\bQ;\le)$. Hence $(\Sigma^*;\le_\lex)$
is not scattered. From \cite[Prop.~4.10]{KhoRS05}, we know that the
set of word automatic presentations of scattered linear orders is
decidable.

A linear order $\cL=(L;\le)$ is \emph{rigid} if it does not admit any
non-trivial automorphism, i.e., if the identity mapping $f:L\to
L:x\mapsto x$ is the only automorphism of~$\cL$. The linear orders
$\omega$, $\omega^*$, and $\underline n$ for $n\in\bN$ are all
rigid. On the other hand, $(\bQ;\le)$ as well as $(\mathbb Z;\le)$ are
not rigid.

Note that automorphisms of tree automatic linear orders are binary
relations on $\Sigma^*$. Hence it makes sense to speak of an
\emph{automatic automorphism}. An automatic structure is
\emph{automatically rigid} if it does not have any non-trivial
automatic automorphisms.

 \medskip

Let $\cI=(I;\le)$ be a linear order and, for $i\in I$, let
$\cL_i=(L_i;\le_i)$ be a linear order. Then the
\emph{$\cI$-sum}\footnote{Shuffle sums mentioned in the introduction
  are special cases of this construction where $\cI=(\bQ;\le)$ is the
  rational line and, for every $q\in\bQ$, the set $\{r\in\bQ\mid
  \cL_q\cong\cL_r\}$ is dense.} of these linear orders is defined by
\[
  \sum_{i\in(I;\le)} \cL_i=\left(\biguplus_{i\in I} L_i;
       \bigcup_{i\in I}\mathord{\le_i}\cup
       \bigcup_{\substack{i,j\in I\\i<j}}(L_i\times L_j)\right)\,.
\]
For $\sum_{i\in\underline 2}\cL_i$, we simply write $\cL_1+\cL_2$.
If, for all $i\in I$, $\cL_i=\cL$, then we write $\cL\cdot\cI$ for
$\sum_{i\in(I;\le)}\cL_i$. Note that $\cL\cdot\cI$ is obtained by
replacing every element of~$\cI$ by a copy of~$\cL$. As an example,
consider the linear order $\delta=\omega\cdot\omega^*$. This linear
order will be used as ``delimiter'' in our constructions. It is
isomorphic to $(\bN\times\bN;\le_\delta)$ with
\[
  (i,j)\le_\delta(k,\ell)\mathrel{:\Leftrightarrow}
    j>\ell \text{ or }j=\ell\text{ and }i\le k\,.
\]
Hence it forms a descending chain of ascending chains. Therefore, it
has no minimal and no maximal element, is rigid and scattered. Note that
\[
   \delta\cong(10^+1^+0;\le_\lex)
\]
where we assume $0<1$. The isomorphism is given by $(i,j)\mapsto
10^{j+1}1^{i+1}0$.

\label{page-definition-D}
Also for later use, we next define a regular set $D=\{t_{i,j}\mid
i,j\ge0\}$ of trees such that $\delta\cong(D;\le_\trees)$. The
alphabet of these trees will be the singleton $\{\$\}$ so that a tree
is completely given by its domain. Then set inductively
\begin{align*}
  \dom(t_{0,j})& =\{\varepsilon, 0, 00\}\cup 
                  1\{0^k\mid 0\le k\le j\} \text { and } \\
  \dom(t_{i+1,j}) & =\{\varepsilon, 0, 00\}\cup  01\,\dom(t_{i,j})
\end{align*}

The trees $t_{0,4}$ and $t_{2,2}$ are depicted in
Figure~\ref{fig:trees-from-D} (left-arrows denote $0$-sons,
right-arrows denote $1$-sons). The tuple $\tau(t_{i,j})$ has the
following form
\[
  \tau(t_{0,j})=(t_j,\emptyset,\emptyset)
\text{ with } \dom(t_j)=\{0^k\mid 0\le k\le j\}\text{ and }
  \tau(t_{i+1,j}) = (\emptyset,t_{i,j},\emptyset)\,.
\]
Note that all trees $t_{i,j}$ coincide on their main branch, i.e.,
$t_{i,j}\rest_{0^*}= t_{k,\ell}\rest_{0^*}$. Hence $t_{i,j}\le_\trees
t_{k,\ell}$ if and only if $\tau(t_{i,j})$ is lexicographically
smaller than $\tau(t_{k,\ell})$. But this is the case if and only if
\begin{itemize}
\item $0=k<i$ or
\item $0=i=k$ and $j\le\ell$ or
\item $0<i,k$ and $t_{i-1,j}\le_\trees t_{k-1,\ell}$.
\end{itemize}
By induction, this is equivalent to $i>k$ or $i=k$, $j\le \ell$. Hence
$(D;\le_\trees)\cong\delta$. 

\begin{figure}
  \centering
  {
    \begin{picture}(30,50)(-5,5) 
      \gasset{Nw=2,Nh=2}
      \node(2)(15,50){} \node(3)(10,45){} \node(4)(5,40){}
      \drawedge(2,3){} \drawedge(3,4){}
      \node(5)(20,45){} \drawedge(2,5){} \node(6)(15,40){}
      \drawedge(5,6){} \node(7)(10,35){} \drawedge(6,7){}
      \node(8)(5,30){} \drawedge(7,8){} \node(9)(0,25){}
      \drawedge(8,9){}
    \end{picture}
    \begin{picture}(50,50)(-10,5)
      \gasset{Nw=2,Nh=2}
      \node(2)(15,50){} \node(3)(10,45){} \node(4)(5,40){}
      \drawedge(2,3){} \drawedge(3,4){}

      \node(5)(15,40){} \drawedge(3,5){}
      \node(6)(10,35){} \drawedge(5,6){}
      \node(7)(5,30){} \drawedge(6,7){}
%      \node(8)(0,25){} \drawedge(7,8){}

      \node(5)(15,30){} \drawedge(6,5){}
      \node(6)(10,25){} \drawedge(5,6){}
      \node(7)(5,20){} \drawedge(6,7){}
%      \node(8)(-5,10){} \drawedge(7,8){}

      \node(5')(20,25){} \drawedge(5,5'){}
      \node(6)(15,20){} \drawedge(5',6){}
      \node(7)(10,15){} \drawedge(6,7){}
    \end{picture}
  }
  \caption{Two trees from $D$}
\label{fig:trees-from-D}
\end{figure}
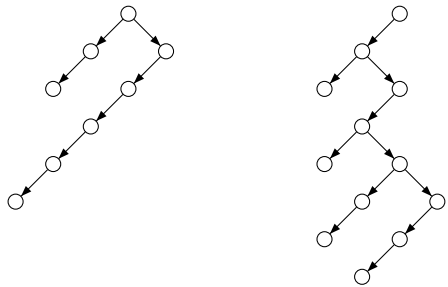

\iffalse
  More generally, if $\cI$ and $\cL_i$ for $i\in \cI$ are scattered,
  then so is their $\cI$-sum $\sum_{i\in\cI}\cL_i$. Hence also the
  product $\cL_1\cdot\cL_2$ of scattered linear orders is
  scattered. Hausdorff gave the following nice characterisation: the
  class of countable scattered linear orderings is the least class of
  linear orders containing $\underline 0$ and $\underline 1$ that is
  closed under $\omega$- and $\omega^*$-sums (cf.\
  \cite[Theorem~5.24]{Ros82}). More precisely, let $\VD_0$ denote the
  class of all linear orders (isomorphic to) $\underline 0$ or
  $\underline 1$. For a countable ordinal $\alpha$, let $\VD_\alpha$
  denote the class of all linear orders of the form
  $\sum_{i\in\omega}\cL_i$ or $\sum_{i\in\omega^*}\cL_i$ with
  $\cL_i\in\bigcup_{\beta<\alpha} \VD_\beta$. Then Hausdorff's theorem
  states that $\bigcup_{\alpha<\omega_1} \VD_\alpha$ is the class of all
  countable scattered linear orders. Note that the classes $\VD_\alpha$
  for $\alpha<\omega_1$ form a strict hierarchy. From \cite{KhoRS05},
  we know that all scattered word automatic linear orders belong to
  $\bigcup_{n\in\bN} \VD_n$. In \cite{JaiKSS12} and in \cite{Hus12}, it
  is shown that all tree automatic scattered linear orders belong
  to~$\bigcup_{\alpha<\omega^\omega} \VD_\alpha$.
\fi

\section{Automorphisms of linear orders on words}

In this section, we consider linear orders on sets of words. The
universe will be regular or contextfree and the order will mainly be
the lexicographic order $\le_\lex$ and its relative~$\le_\lex^2$.

\subsection{Regular universe and $\le_\lex$}

Courcelle~\cite{Cou78} initiated the study of regular words, i.e.,
labeled linear orders derived from frontiers of regular trees. Thomas
proved that the isomorphism problem for these words is decidable
\cite{Tho86}, the complexity of this problem was determined by Lohrey
and Mathissen~\cite{LohM11}.

Based on techniques and results from \cite{BloE05}, we will show that,
given a regular language~$L$, it is decidable whether $(L;\le_\lex)$
is rigid. This proof requires the consideration of regular words: An
\emph{extended word} is a labeled linear order with a finite set of
labels. A \emph{regular word} over the alphabet~$A$ is an extended
word $(L;\le,\lambda)$ with $\lambda:L\to A$ such that
\begin{itemize}
\item $L$ and $\lambda^{-1}(a)$ for $a\in A$ are regular subsets of
  $\Sigma^*$ and
\item $\le$ is the lexicographic linear order $\le_\lex$.
\end{itemize}

Regular words can be described by \emph{terms over $A$} that we define
next. These terms use constants $a\in A$ (standing for the extended
word on~$\underline1$ whose only element is labeled~$a$) and the
following operations:
\begin{itemize}
\item concatenation of words (denoted $\mu+\nu$)
\item $\omega$-power (denoted $\mu\cdot\omega$)
\item $\omega^*$-power (denoted $\mu\cdot\omega^*$)
\item shuffle (denoted $[\nu_1,\nu_2,\dots,\nu_k]^\eta$) for arbitrary
  $k\ge1$. 
\end{itemize}
The semantics of the concatenation, $\omega$-power and
$\omega^*$-power generalize the corresponding operations for linear
orders in the obvious way. To define the extended word
$[\nu_1,\dots,\nu_k]^\eta$, let $\lambda:\bQ\to\{1,2,\dots,k\}$ be a
mapping such that $\lambda^{-1}(i)$ is dense for all $1\le i\le
k$. Then set $\nu(q)=\nu_{\lambda(q)}$ for $q\in\bQ$ and define
\[
  [\nu_1,\dots,\nu_k]^\eta=\sum_{q\in(\bQ;\le)}\nu(q)
\]
as we did for linear orders. For a term $t$, let $|t|$ denote the
extended word it describes. 

Let $\nu=(L;\le_\lex,\lambda)$ be a regular word given by finite
automata that accept $L$ and $\lambda^{-1}(a)$ for $a\in A$ (without
loss of generality, we can assume $\varepsilon\notin L$). Let
$\mathrm{Pref}(L)\subseteq\Sigma^*$ denote the set of proper prefixes
of words from~$L$. Then $T=(\mathrm{Pref}(L)\cup\{w\$a\mid wa\in
L\};\le_\pref)$ is a regular tree whose leaves are of the form $w\$a$
for $wa\in L$. Let $\$$ be the least letter of
$\Sigma\cup\{\$\}$. Then we can recover $\nu$ by reading the leaves of
the tree from left to right and label them by their last letter. From
this regular tree, we can read off a system of equations as follows:
Let $T_1, T_2, \dots, T_n$ denote the subtrees of $T$ (up to
isomorphism) with $T=T_1$. We have $n$ variables $x_1,x_2,\dots,x_n$
and the system of the following equations: if $T_i$ is not a
singleton, then we have the equation
\[
   x_i= x_{i_1} x_{i_2} \dots x_{i_{k_i}}
\]
where $T_{i_1}, T_{i_2} \dots T_{i_{k_i}}$ are the subtrees of $T_i$
rooted at the children of the root. If $T_i$ is a singleton, and if
its only node is labeled $a\in A$, then we have the equation
\[
   x_i = a\,.
\]
Then the regular word $\nu$ is the initial solution (in the sense of
\cite{Cou78}) of this system of equations.

From this system of equations, one can compute a term $t$ with
$|t|\cong\nu$ (Heilbrunner~\cite{Hei80}). Thus, to decide whether
$\nu$ has a nontrivial automorphism, we have to be able to decide,
given a term $t$, whether $|t|$ has a nontrivial automorphism.

Let $\nu=(L;\le,\lambda)$ be an extended word. On the set $L$, we
define an equivalence relation~$\sim$ by $x\sim y$ if (where we assume
$x\le y$)
\begin{itemize}
\item the interval $[x,y]$ is finite or
\item for any $x',y',z\in[x,y]$ with $x'<y'$, there exists
  $z'\in(x',y')$ with $\lambda(z)=\lambda(z')$.
\end{itemize}
Bloom and \'Esik \cite{BloE05} define a (decidable) class of
terms~$D(A)$ (called \emph{primitive terms in normal form}) with the
following properties
\begin{itemize}
\item If $\nu$ is a regular word with a single $\sim$-equivalence
  class, then there exists a term $t\in D(A)$ with $\nu\cong|t|$.
\item If $t\in D(A)$, then $|t|$ has a single $\sim$-equivalence class.
\item If $s,t\in D(A)$ with $|s|\cong|t|$, then $s=t$.
\end{itemize}

Let $\nu=(L;\le_\lex,\lambda)$ be a regular word. The equivalence
classes with respect to~$\sim$ are convex sets. Hence they can be ordered by
\[
   [x]_\sim <' [y]_\sim\mathrel{:\Leftrightarrow} 
       x<y\text{ and }x\not\sim y
\]
such that $(L/\mathord\sim;\le')$ is a linear order. For $X\in
L/\mathord\sim$, the restriction of $\nu$ to the equivalence class $X$
is a regular word with a single $\sim$-equivalence class. Hence there
exists a unique term $t_X\in D(A)$ with $|t|\cong\nu\rest X$. Define
$\lambda':L/\mathord\sim\to D(A)$ by $X\mapsto t_X$. Then
\[
  c(\nu)=(L/\mathord\sim;\le',\lambda')
\]
is an extended word with possibly infinite alphabet. 

To decide whether $|t|$ is rigid, we proceed as follows: Using the
algorithm by Bloom and \'Esik \cite{BloE05}, we construct a term
$c(t)$ with $|c(t)|\cong c(|t|)$, in particular, $c(|t|)$ has a finite
alphabet. From this term $c(t)$, we can extract the set $D$ of terms
from $D(A)$ that appear in~$c(t)$. Then we observe that $|t|$ has a
nontrivial automorphism if and only if
\begin{itemize}
\item $c(|t|)=|c(t)|$ has a nontrivial automorphism or
\item there exists a $\sim$-equivalence class $X$ such that $|t|\rest
  X$ has a non-trivial automorphism.
\end{itemize}
Note that $s\in D$ if and only if there exists a $\sim$-equivalence
class $X$ with $|t|\rest X\cong|s|$. Hence the second item holds if
and only if there exists $s\in D$ such that $|s|$ has a nontrivial
automorphism -- but this is the case if and only if $s$ is of the form
$u\cdot\omega^*+v\cdot\omega$ or $[u_1,\dots,u_k]^\eta$.  To decide
whether $|c(t)|$ has a nontrivial automorphism, we call this process
recursively. From \cite{KhoRS05}, we observe that $c^n(|t|)$ is a
singleton for some~$n\in\bN$, hence this recursive procedure stops
eventually with $|t|$ a singleton.

Formulated for linear orders, we therefore showed
\begin{theorem}\label{T-regular-rigid}
  The set of regular languages $L$ such that $(L;\le_\lex)$ is rigid,
  is decidable.
\end{theorem}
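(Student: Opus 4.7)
The plan is to view $(L;\le_\lex)$ as a regular word over a singleton alphabet $A=\{a\}$ with the trivial labeling $\lambda\equiv a$, so that automorphisms of the regular word are exactly automorphisms of the linear order. From a DFA for $L$ (which we may assume avoids $\varepsilon$) I build the regular tree $T$ of proper prefixes of $L$ with leaves $w\$a$, extract the finite system of equations in variables indexed by the isomorphism classes of subtrees of $T$, and feed this into Heilbrunner's algorithm to compute a term $t$ over $\{+,\cdot\omega,\cdot\omega^*,[\cdot]^\eta\}$ with $|t|\cong(L;\le_\lex)$. From now on the question is whether the explicitly given term $|t|$ is rigid.

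I would then reduce rigidity of $|t|$ recursively via the condensation $c$. The key observation is that the $\sim$-equivalence classes of any extended word are convex, so every automorphism of $|t|$ permutes them, hence induces an automorphism of the quotient $c(|t|)$ and restricts to automorphisms on each class; conversely, any nontrivial automorphism on the quotient or on any single $\sim$-class lifts to a nontrivial automorphism of $|t|$. Therefore $|t|$ is non-rigid iff $c(|t|)$ is non-rigid or some $\sim$-class of $|t|$ is non-rigid. Using the Bloom-\'Esik algorithm I compute a term $c(t)$ with $|c(t)|\cong c(|t|)$; its alphabet is a finite, effectively readable set $D\subseteq D(A)$, and by the uniqueness property of primitive terms in normal form the $\sim$-classes of $|t|$ are (up to isomorphism) exactly the $|s|$ for $s\in D$.

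Termination and the base case are then straightforward. For the base case I check the finitely many $s\in D$: since $s\in D(A)$ is in primitive normal form, $|s|$ admits a nontrivial automorphism iff $s$ is of the shape $u\cdot\omega^*+v\cdot\omega$ or $[u_1,\dots,u_k]^\eta$, which is a purely syntactic test on $s$. For the recursive call on $c(|t|)$, termination follows from the result in \cite{KhoRS05} that iterating the condensation on a word automatic scattered linear order reaches a singleton after finitely many steps; on a singleton, rigidity is trivial.

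The main obstacle I anticipate is really a bookkeeping one: verifying that Heilbrunner's construction and the Bloom-\'Esik normal form procedure are uniformly effective on inputs coming from an arbitrary regular $L$ (so that everything composes into a single algorithm), and cleanly justifying the "restricts to each $\sim$-class" direction of the splitting, i.e.\ that independent nontrivial automorphisms on each class can be chosen coherently without disturbing convexity. Once that is in place the alternation between condensation and the syntactic rigidity test on $D(A)$-terms gives a terminating decision procedure, proving Theorem~\ref{T-regular-rigid}.
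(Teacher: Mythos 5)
Your proposal is correct and follows essentially the same route as the paper: prefix tree and system of equations, Heilbrunner's algorithm to obtain a term, the Bloom--\'Esik condensation $c$ with the syntactic rigidity test on primitive terms in normal form, and termination via the result of \cite{KhoRS05} that iterated condensation reaches a singleton. The only (harmless, arguably cleaner) deviation is that you take a trivial one-letter labeling, whereas the paper labels each word of $L$ by its last letter when forming the regular word.
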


\subsection{Regular universe and $\le_\lex^2$}\label{S-word-rigid}

The situation changes completely when we move from the lexicographic
order $\le_\lex$ to the linear order $\le_\lex^2$ since, as we will
see, rigidity of $(L;\le_\lex^2)$ is undecidable for regular
languages~$L$. 

Let $p,q\in\bN[\bar x]$ be two polynomials with coefficients in $\bN$
and variables among $\bar x=(x_1,\dots,x_k)$. Then define the linear order 
\[
   \cL_{p,q}=\sum_{\bar x\in(\bN^k;\le_\lex)}
     \left(
       (\underline{p(\bar x)}+\delta)\cdot\omega^*+(\underline{q(\bar x)}+\delta)\cdot\omega)
     \right)\,.
\]
This linear order $\cL_{p,q}$ forms an $\omega$-sequence of ``blocks''
of the form
\[
   B(m,n)=(\underline{m}+\delta)\cdot\omega^*+(\underline{n}+\delta)\cdot\omega
\]
with $m,n\in\bN$. Therefore, every automorphism of $\cL_{p,q}$ has to
map every block onto itself. In other words, $\cL_{p,q}$ is rigid if
and only if all these blocks are rigid. But $B(m,n)$ is rigid if and
only if $m\neq n$. Hence we showed
\begin{equation}
  \label{eq:auto1}
  \cL_{p,q}\text{ is rigid}\iff\forall\bar x\in\bN^k:p(\bar x)\neq q(\bar x)\,.
\end{equation}
Finally note that $\cL_{p,q}$ is scattered since $\delta$, $\omega$, and
$\omega^*$ are all scattered.

We now prove that $\cL_{p,q}$ is automatic or, more specifically, we
will construct a regular set $L\subseteq\{0,1\}^+\otimes\{0,1\}^+$
such that $\cL_{p,q}\cong(L;\le_\lex^2)$ (see
Lemma~\ref{L-word-automatic} below).

Let $\cA=(Q,\iota,\Delta,F)$ be a word automaton over the
alphabet~$\Sigma$ and let $w\in\Sigma^+$ be a word. Then $\Run(\cA,w)$
is the set of all words over $\Delta$ of the form
\[
  (q_0,a_1,q_1,\iota)(q_1,a_2,q_2,\iota)\dots(q_{k-1},a_k,\iota,\iota)
\]
with $w=a_1a_2\dots a_k$ and $q_0\in F$. These words encode the
accepting runs of the word automaton~$\cA$ (recall that word automata
are special bottom up tree automata which explains the unusual
position of the initial and final states in the run). Furthermore, let
$\Run(\cA)=\bigcup_{w\in\Sigma^+}\Run(\cA,w)$.

\begin{lemma}\label{L-word-automatic-1}
  From polynomials $p,q\in\bN[x_1,\dots,x_k]$, one can construct an
  alphabet $\Sigma$ and a regular language
  $K\subseteq\Sigma^+\otimes\Sigma^+$ such that
  $(K;\le_\lex^2)\cong\cL_{p,q}$.

  If $\cL_{p,q}$ has a non-trivial automorphism, $(K;\le_\lex^2)$ has
  a non-trivial automatic automorphism.
\end{lemma}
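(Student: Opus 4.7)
The plan is to realise $\cL_{p,q}$ by splitting a pair $u\otimes v\in K$ into an outer word $u$ indexing $\bar x\in\bN^k$ and an inner word $v$ naming a position inside $(\underline{p(\bar x)}+\delta)\cdot\omega^*+(\underline{q(\bar x)}+\delta)\cdot\omega$; the definition of $\le_\lex^2$ then produces the outer sum $\sum_{\bar x\in(\bN^k;\le_\lex)}$ for free. I encode $\bar x=(x_1,\dots,x_k)$ as $u(\bar x)=1^{x_1+1}0\cdots 1^{x_k+1}0\in(1^+0)^k$; a short case analysis on the first coordinate where $\bar x$ and $\bar y$ differ shows $\bar x\le_\lex\bar y$ iff $u(\bar x)\le_\lex u(\bar y)$. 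It then suffices to build, for each such $u$, a regular fibre of $v$'s whose $\le_\lex$-order is $(\underline{p(\bar x)}+\delta)\cdot\omega^*+(\underline{q(\bar x)}+\delta)\cdot\omega$.

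I stratify $v$ by successive leading factors. A leading $0$ (resp.\ $1$) puts us in the $\omega^*$-half (resp.\ $\omega$-half), so the halves lex-concatenate correctly. Within the $\omega^*$-half the factor $0^{i+1}1$ selects block $B_i$, so that longer prefixes are lex-smaller and blocks appear in $\omega^*$-order; symmetrically $1^{j+1}0$ selects $C_j$ in $\omega$-order inside the $\omega$-half. Inside a block a further leading $0$ (resp.\ $1$) selects the $\underline m$-part (resp.\ $\delta$-part), and $\delta$ is encoded by the regular set $0^+1^+0$ (a variant of the paper's encoding of $\delta$ as $10^+1^+0$, which one checks yields $\delta$ under $\le_\lex$). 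The main technical task is the $\underline m$-part for $m=p(\bar x)$: write $p=\sum_\ell c_\ell\,x_{i_{\ell,1}}\cdots x_{i_{\ell,d_\ell}}$, and let the $\ell$-th monomial be witnessed by $v$-suffixes carrying a monomial tag, one of $c_\ell$ copy-identifiers, and, for each factor, a choice of a non-final $1$-position inside the corresponding $1$-block of $u(\bar x)$. Since the $i$-th $1$-block has $x_i$ non-final $1$'s, the monomial contributes $c_\ell\prod_j x_{i_{\ell,j}}$ many $v$'s, summing to $p(\bar x)$; a finite automaton parses $u$'s block structure in parallel with $v$'s marks, so $K$ is regular. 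The set $\underline{q(\bar x)}$ is handled identically, and any linear order on $m$ words is $\underline m$.

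For the automorphism statement, use \eqref{eq:auto1}: non-rigidity of $\cL_{p,q}$ gives $\bar x_0$ with $p(\bar x_0)=q(\bar x_0)=m$, and the block over $\bar x_0$ is $(\underline m+\delta)\cdot\omega^*+(\underline m+\delta)\cdot\omega=(\underline m+\delta)\cdot\mathbb Z$, which admits the nontrivial shift-by-one-block $\sigma$. Under the above encoding $\sigma$ acts as a finite transduction: $0^{i+2}1R\mapsto 0^{i+1}1R$ for $i\ge 1$, the seam $001R\mapsto 110R$, and $1^{j+2}0R\mapsto 1^{j+3}0R$ for $j\ge 0$ (the inner word $R$ does not change because both blocks are $\underline m+\delta$). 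The desired automatic automorphism of $K$ is then ``apply $\sigma$ on the single fibre $u=u(\bar x_0)$ and act as the identity on every other fibre''; since $u(\bar x_0)$ is a fixed word, this relation on $K$ is automatic. The main obstacle is the multi-level bookkeeping needed to certify that the layered lex-encoding really combines as the intended iterated sum; each layer is separated by a distinct leading letter so different layers cannot interfere, but the verification must still be carried out level by level.
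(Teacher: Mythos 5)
Your order-theoretic skeleton is sound and, in spirit, the same as the paper's: the outer sum is indexed by the first component, the $\omega^*$/$\omega$ halves and their blocks are selected by prefix codes $0^{i+1}1$ and $1^{j+1}0$, $\delta$ is realised by a fixed regular lex-code, and $\underline{p(\bar x)}$ is counted by choices of marked positions inside the $1$-blocks of $u(\bar x)$ --- this last device is exactly the ``$p(\bar x)$ many accepting runs of an NFA on $a^{\bar x}$'' construction that the paper imports from \cite[Lemma~7]{KusLL11}, since a run is nothing but a positionally aligned certificate of such choices. The shift automorphism supported on the single fibre over $\bar x_0$ also matches the paper's.

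The genuine gap is the regularity of $K$. You put everything except $\bar x$ into the second component $v$, layered as half-selector, then block index $0^{i+1}1$ (resp.\ $1^{j+1}0$), then part-selector, and only then the monomial tag and the position marks. A finite automaton reading the convolution $u\otimes v$ can only relate letters of $u$ and $v$ occurring at the \emph{same} position, so to verify that your marks designate non-final $1$-positions of the correct $1$-blocks of $u(\bar x)$ they must be aligned with $u(\bar x)$; in your layout they are displaced by the block index, whose length is unbounded, and the correspondence is no longer checkable (already for $k=1$ and $p=x_1$ a pumping argument shows the resulting language is not regular). You also cannot simply move the marks to the front of $v$: lexicographic significance decreases from left to right, so the marks would then outrank the block index and destroy the $\omega^*$/$\omega$ structure. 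This tension is precisely what the paper's use of $\le_\lex^2$ resolves: the block index $0^m1$/$1^m0$ is appended to the \emph{first} component after $a^{\bar x}$ (hence more significant than anything in the second component, yet not disturbing alignment), while the second component is either a run of $\cA_p$ on $a^{\bar x}$ --- a word of the same length as $a^{\bar x}$, checkable letter by letter --- or a word of the fixed regular set $32^+3^+2$ encoding $\delta$. Once you relocate the block index and the selectors into the first component in this way, the remainder of your argument, including the automatic shift automorphism on the fibre over $\bar x_0$, goes through and essentially coincides with the paper's proof.
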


\begin{proof}
  Let $p$ and $q$ be polynomials from $\bN[x_1,\dots,x_k]$. For $\bar
  x=(x_1,\dots,x_k)\in\bN^k$, set 
  \[
  a^{\bar x}=a^{x_1}\cent a^{x_2}\cent \cdots\cent a^{x_k}\cent 
         \in(a^*\cent )^k\,.
  \]
  Then, as in the proof of \cite[Lemma~7]{KusLL11}, one can construct
  nondeterministic finite automata~$\cA_p=(Q_p,\iota_p,\Delta_p,F_p)$
  and $\cA_q=(Q_q,\iota_q,\Delta_q,F_q)$ with
  $L(\cA_p),L(\cA_q)\subseteq(a^*\cent )^k$, such that, for~$\bar
  x\in\bN^k$, the NFA~$\cA_p$ has precisely $p(\bar x)$ many accepting
  runs on the word~$a^{\bar x}$, i.e., $|\Run(\cA_p,a^{\bar
    x})|=p(\bar x)$, and similarly $|\Run(\cA_q,a^{\bar x})|=q(\bar
  x)$. We will assume $\Delta_p\cap\Delta_q=\emptyset$.

  Define the language $K$ by
  \begin{eqnarray*}
  K&=&\bigcup_{\bar x\in\bN^k}a^{\bar x}0^+1\otimes(\Run(\cA_p,a^{\bar
    x})\cup 32^+3^+2)\\&&\cup\bigcup_{\bar x\in\bN^k}a^{\bar x}1^+0\otimes(\Run(\cA_q,a^{\bar
    x})\cup 32^+3^+2)\,.
  \end{eqnarray*}
  Hence any word from $K$ is the convolution of two words over the
  alphabet 
  \[
    \Sigma=\{a,\cent ,0,1,2,3\}\cup\Delta_p\cup\Delta_q\,.
  \]
  
  We have to show that the language~$K$ is effectively regular. Here,
  the crucial point is the regularity of
  \[
    \bigcup_{\bar x\in\bN^k} a^{\bar x}0^+1\otimes\Run(\cA_p,a^{\bar x})
    =
    \left[\bigcup_{\bar x\in\bN^k} a^{\bar x}\otimes\Run(\cA_p,a^{\bar x})
    \right]
    \cdot
    \left(0^+1\otimes\{\varepsilon\}\right)
  \]
  (this equality holds since $|w|=|W|$ for any $w\in(a^*\cent )^k$ and
  $W\in\Run(\cA_p,w)$).  But a word belongs to the language in square
  brackets if and only if it is the convolution of a word~$w$ from the
  regular language $(a^*\cent )^k$ and a run of the automaton $\cA_p$
  on this word~$w$, a property that a finite automaton can check
  easily.

  On the alphabet~$\Sigma$, we now fix a linear order~$\le$ such that
  \[
  \Delta_p\cup\Delta_q < 0 < 1 < 2 < 3 < \cent  < a\,.
  \]
  The associated order $\le_\lex^2$ on the language~$K$ can now be
  characterized as follows:
  \[
    a^{\bar x}b^m (1-b)\otimes r \le_\lex^2 a^{\bar y}c^n(1-c)\otimes s
  \]
  (with $b,c\in\{0,1\}$) if and only if
  \begin{enumerate}[(i)]
  \item $b=0$, $c=1$ and $\bar x \le_\lex\bar y$, or
  \item $b=1$, $c=0$ and $\bar x<_\lex \bar y$, or
  \item $b=c$ and
    \begin{enumerate}[({iii}.1)]
    \item $\bar x<_\lex \bar y$, or
    \item $\bar x=\bar y$, $b=0$, and $m>n$, or
    \item $\bar x=\bar y$, $b=1$, and $m<n$, or
    \item $\bar x=\bar y$, $m=n$, and
      \begin{enumerate}[({iii.4}.1)]
      \item $r\in\Run(\cA_p)\cup\Run(\cA_q)$ and $s\in 32^+3^+2$, or
      \item $r,s\in\Run(\cA_p)\cup\Run(\cA_q)$ and $r \le_\lex s$, or
      \item $r,s\in 32^+3^+2$ and $r \le_\lex s$.
      \end{enumerate}
    \end{enumerate}
  \end{enumerate}

  We show $(K;\le_\lex^2)\cong\cL_{p,q}$. For $\bar x\in \bN^k$ and
  $m\ge1$, let $\cI_{\bar x,0^m1}$ denote the restriction of
  $(K;\le_\lex^2)$ to the set $a^{\bar x}0^m1\otimes
  (\Run(\cA_p,a^{\bar x})\cup 32^+3^+2)\subseteq K$. By (iii.4.1),
  $\cI_{\bar x,0^m1}$ is isomorphic to the sum of the restrictions of
  $(K;\le_\lex^2)$ to the sets $a^{\bar x}0^m1\otimes
  \Run(\cA_p,a^{\bar x})$ and $a^{\bar x}0^m1\otimes 32^+3^+2$,
  resp. By (iii.4.2) and the choice of the automaton $\cA_p$, the
  first restriction is isomorphic to~$\underline{p(\bar x)}$. Recall
  that $(32^+3^+2;\le_\lex)\cong\delta$. Hence, the second restriction
  is isomorphic to~$\delta$ by (iii.4.3). In summary,
  \[
  \cI_{\bar x,0^m1}\cong\underline{p(\bar x)}+\delta\,.
  \]
  Next, let $\cI_{\bar x,0^+1}$ denote the restriction of
  $(K;\le_\lex^2)$ to the set $a^{\bar x}0^+1 \otimes
  (\Run(\cA_p,a^{\bar x})\cup 32^+3^+2)$. Note that, by (iii.2), we
  have
  \[
  \cI_{\bar x,0^{m+1}1}<_\lex^2 \cI_{\bar x,0^m1}
  \]
  for all $m\ge1$. Hence
  \[
  \cI_{\bar x,0^+1}=\sum_{m\le -1}\cI_{\bar x,0^{-m}1} \cong
  (\underline{p(\bar x)}+\delta)\cdot\omega^*\,.
  \]

  With $\cI_{\bar x,1^+0}$ the restriction of $(K;\le_\lex^2)$ to the
  set $a^{\bar x}1^+0\otimes(\Run(\cA_q,a^{\bar x})\cup 32^+3^+2)$, we
  obtain similarly
  \[
  \cI_{\bar x,1^+0}=\sum_{m\ge 1}\cI_{\bar x,1^{m}0} \cong
  (\underline{q(\bar x)}+\delta)\cdot\omega
  \]
  (the reason for the factor $\omega$ instead of $\omega^*$ above is
  the difference between (iii.2) and (iii.3)).

  Finally, for $\bar x\in\bN^k$, let $\cI_{\bar x}$ denote the
  restriction of $(K;\le_\lex^2)$ to the set
  \[
     \bigg[a^{\bar x}0^+1\otimes(\Run(\cA_p,a^{\bar x})\cup 32^+3^+2)\bigg]
     \cup
     \bigg[a^{\bar x}1^+0\otimes(\Run(\cA_q,a^{\bar x})\cup 32^+3^+2)\bigg]\,.
  \]
  Then (i) with $\bar x=\bar y$ and the above imply
  \[
    \cI_{\bar x}=\cI_{\bar x,0^+1}+\cI_{\bar x,1^+0}
    \cong
    (\underline{p(\bar x)}+\delta)\cdot\omega^*
    +(\underline{q(\bar x)}+\delta)\cdot\omega\,.
  \]
  Together with (i), (ii), and (iii.1), this ensures
  \[
    (K;\le_\lex^2)=\sum_{\bar x\in (\bN^k;\le_\lex)} 
    \cI_{\bar x}\cong\cL_{p,q}\,.
  \]

  Now suppose that $\cL_{p,q}$ has a non-trivial automorphism. Then,
  as we saw above, there is $\bar y\in\bN^k$ such that $p(\bar
  y)=q(\bar y)$. From the construction of the automata $\cA_p$ and
  $\cA_q$, we infer $|\Run(\cA_p,a^{\bar y})|=|\Run(\cA_q,a^{\bar
    y})|$. Let
  \[
    \Run(\cA_p,a^{\bar y})=\{\rho_1,\dots,\rho_n)\text{ and }
    \Run(\cA_q,a^{\bar y})=\{\sigma_1,\dots,\sigma_n)
  \]
  with
  \[
    \rho_1 <_\lex \rho_2 <_\lex \cdots <_\lex \rho_n \text{ and }
    \sigma_1 <_\lex \sigma_2 <_\lex \cdots <_\lex \sigma_n\,.
  \]
  Now define a mapping $f:K\to K$ by
  \[
    f(a^{\bar x}b^m(1-b)\otimes r)=
    \begin{cases}
      a^{\bar x}b^m(1-b)\otimes r & \text{ if }\bar x\neq\bar y\\
      a^{\bar y}b^{m-1}(1-b)\otimes r & \text{ if }\bar x=\bar y, b=0, m>1\\
      a^{\bar y}10\otimes r & \text{ if }\bar x=\bar y, b=0, m=1, r\in 32^+3^+2\\
      a^{\bar y}10\otimes \sigma_i & \text{ if }\bar x=\bar y, b=0, m=1, r=\rho_i\\
      a^{\bar y}b^{m+1}(1-b)\otimes r & \text{ if }\bar x=\bar y, b=1
    \end{cases}
  \]
  This mapping fixes all elements of $K$ not belonging to $\cI_{\bar
    y}$. On this linear order $\cI_{\bar y}$, it acts as an
  automorphism. Hence $f$ is a non-trivial automorphism
  of~$(K;\le_\llex^2)$. Note that the universe of $\cI_{\bar y}$ is
  regular. It follows that $f^\otimes$ is regular.
\end{proof}

\begin{lemma}\label{L-word-automatic}
  From polynomials $p,q\in\bN[x_1,\dots,x_k]$, one can construct a
  regular language $L\subseteq\{0,1\}^+\otimes\{0,1\}^+$ such that
  $(L;\le_\lex^2)\cong\cL_{p,q}$.

  If $\cL_{p,q}$ has a non-trivial automorphism, then there exists a
  non-trivial automorphism~$h$ of $(L;\le_\lex^2)$ such that
  $h^\otimes$ is regular.
\end{lemma}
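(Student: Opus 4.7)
The plan is to reduce Lemma~\ref{L-word-automatic-1} to the binary case via a length-preserving, order-preserving block encoding of the finite alphabet $\Sigma$ produced there. Set $\ell=\lceil\log_2|\Sigma|\rceil$ and choose an injection $\phi:\Sigma\to\{0,1\}^\ell$ that preserves the chosen linear order on $\Sigma$ (any strictly increasing injection into the $\le_\lex$-ordered set $\{0,1\}^\ell$ will do). Extend $\phi$ to a monoid morphism $\Sigma^+\to\{0,1\}^+$. Because all codewords have the same length~$\ell$ and $\phi$ is order-preserving on single letters, the extension preserves $\le_\lex$: comparing $u,v\in\Sigma^+$ reduces to comparing their first differing letter, which in the image translates to the comparison of the corresponding equal-length codewords inside the image. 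From this, $\Phi:u\otimes v\mapsto\phi(u)\otimes\phi(v)$ is an order isomorphism with respect to $\le_\lex^2$.

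Next, I would define $L=\{\phi(u)\otimes\phi(v)\mid u\otimes v\in K\}$ and verify that $L$ is regular. For this, introduce the morphism $\psi$ from the convolution alphabet $(\Sigma\cup\{\#\})^2$ into $(\{0,1,\#\}^2)^*$ given by
\[
\psi(a,b)=(\phi'(a)_1,\phi'(b)_1)\,(\phi'(a)_2,\phi'(b)_2)\,\cdots\,(\phi'(a)_\ell,\phi'(b)_\ell),
\]
where $\phi'$ extends $\phi$ by $\phi'(\#)=\#^\ell$. A direct check shows $\psi(u\otimes v)=\phi(u)\otimes\phi(v)$ (the padding in the shorter component simply gets expanded by the factor $\ell$ in the corresponding coordinate), so $L=\psi(K)$ is regular as a morphic image of the regular language $K$. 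Combining this with the isomorphism~$\Phi$ gives $(L;\le_\lex^2)\cong(K;\le_\lex^2)\cong\cL_{p,q}$.

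For the automorphism part, suppose $\cL_{p,q}$ has a non-trivial automorphism. By Lemma~\ref{L-word-automatic-1}, there is a non-trivial automorphism $f$ of $(K;\le_\lex^2)$ with $f^\otimes$ regular. Set $h=\Phi\circ f\circ\Phi^{-1}$; this is a non-trivial automorphism of $(L;\le_\lex^2)$. To see that $h^\otimes$ is regular, view $f^\otimes$ as a regular language over the four-fold convolution alphabet $(\Sigma\cup\{\#\})^4$ (the two coordinates of $K$ are themselves convolutions), and apply the analogous $4$-coordinate block morphism that sends each letter $(a_1,a_2,a_3,a_4)$ to the length-$\ell$ word whose $j$-th letter is $(\phi'(a_1)_j,\phi'(a_2)_j,\phi'(a_3)_j,\phi'(a_4)_j)$. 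The image of $f^\otimes$ under this morphism is exactly $h^\otimes$, and regularity is preserved under morphisms.

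The argument is largely bookkeeping, and the only delicate point is checking that the block morphism commutes with $\otimes$ across words of possibly different lengths; this works precisely because $\phi'$ is length-preserving on proper letters and replicates $\#$ to $\#^\ell$ in every coordinate, so the convolution padding survives the encoding intact.
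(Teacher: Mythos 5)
Your proposal is correct and follows essentially the same route as the paper: both use a length-preserving, order-preserving block encoding of $\Sigma$ into $\{0,1\}^\ell$ (the paper takes $g(\sigma_i)=1^i0^{\ell-i}$ with $\ell=|\Sigma|$ rather than a $\lceil\log_2|\Sigma|\rceil$-bit code, which is immaterial), observe that fixed code length makes the image regular and the map an order isomorphism for $\le_\lex^2$, and obtain the automatic automorphism by conjugating the one from Lemma~\ref{L-word-automatic-1}. Your explicit treatment of the convolution padding via $\phi'(\#)=\#^\ell$ is a detail the paper leaves implicit, but nothing of substance differs.
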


\begin{proof}
  Let $p,q\in\bN[x_1,\dots,x_k]$ be polynomials, let $K$ be the
  language from Lemma~\ref{L-word-automatic-1}, and let $(\Sigma;\le)$
  be the sequence
  \[
    \sigma_1<\sigma_2<\dots<\sigma_\ell\,.
  \]
  Furthermore, let $g$ denote the monoid homomorphism from~$\Sigma^*$
  to $\{0,1\}^*$ defined by $g(\sigma_i)=1^i 0^{\ell-i}$ for $1\le
  i\le \ell$. Now set $L=\{g(u)\otimes g(v)\mid u\otimes v\in
  K\}$. Then $g$ is an isomorphism from $(K;\le_\lex^2)$ onto
  $(L;\le_\lex^2)$. Since all the words $g(\sigma_i)$ have the same
  length, the language $L$ is also regular.

  If $\cL_{p,q}$ has a non-trivial automorphism, then, by
  Lemma~\ref{L-word-automatic-1}, there is a non-trivial
  automorphism~$f$ of $(K;\le_\llex^2)$ such that $f^\otimes$ is
  regular. Hence $g=h^{-1}\circ f\circ h$ is a non-trivial
  automorphism of $(L;\le_\llex^2)$. Note that $h^\otimes$ is
  regular. It follows that also $ g^\otimes$ is regular.
\end{proof}

\begin{theorem}\label{T-rigid-automatic}
  \begin{enumerate}[(i)]
  \item The set of regular languages
    $L\subseteq\{0,1\}^+\otimes\{0,1\}^+$ such that $(L;\le_\lex^2)$
    is rigid (is rigid and scattered, resp.), is $\Pi^0_1$-hard.
  \item The set of regular languages
    $L\subseteq\{0,1\}^+\otimes\{0,1\}^+$ such that $(L;\le_\lex^2)$
    is automatically rigid (automatically rigid and scattered, resp.)
    is $\Pi^0_1$-hard.
  \end{enumerate}
\end{theorem}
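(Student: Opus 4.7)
The plan is to reduce from the complement of Hilbert's tenth problem over~$\bN$, which is $\Pi^0_1$-complete: given two polynomials $p,q\in\bN[x_1,\dots,x_k]$, decide whether $p(\bar x)\neq q(\bar x)$ for all~$\bar x\in\bN^k$. By Matiyasevich's theorem this set is $\Pi^0_1$-hard, so it suffices to produce, uniformly from $(p,q)$, a regular language $L\subseteq\{0,1\}^+\otimes\{0,1\}^+$ whose four rigidity properties are all equivalent to $\forall\bar x\colon p(\bar x)\neq q(\bar x)$.

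First, given $(p,q)$, I invoke Lemma~\ref{L-word-automatic} to obtain a regular $L\subseteq\{0,1\}^+\otimes\{0,1\}^+$ with $(L;\le_\lex^2)\cong\cL_{p,q}$. Scatteredness comes for free: $\cL_{p,q}$ is an $\omega$-sum of blocks $(\underline{m}+\delta)\cdot\omega^*+(\underline{n}+\delta)\cdot\omega$, and $\omega$, $\omega^*$, $\delta$ are all scattered, so their iterated sum is scattered as well. Hence the reduction lands inside the ``scattered'' subclass automatically, which takes care of both variants in~(i) and~(ii) simultaneously.

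Next, I apply the equivalence~\eqref{eq:auto1}: $\cL_{p,q}$ is rigid if and only if $p(\bar x)\neq q(\bar x)$ for all~$\bar x\in\bN^k$. This immediately yields part~(i): the map $(p,q)\mapsto L$ is computable and sends positive instances to rigid (and scattered) orders, negative ones to non-rigid (and scattered) orders. For part~(ii), I use the stronger statement in Lemma~\ref{L-word-automatic}: if $\cL_{p,q}$ is not rigid, the lemma supplies a non-trivial automorphism~$h$ of $(L;\le_\lex^2)$ with $h^\otimes$ regular, so $(L;\le_\lex^2)$ fails to be automatically rigid. Conversely, if $\cL_{p,q}$ is rigid, then $(L;\le_\lex^2)$ has no non-trivial automorphism at all, automatic or not, so it is trivially automatically rigid. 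Hence automatic rigidity of $(L;\le_\lex^2)$ is equivalent to $\forall\bar x\colon p(\bar x)\neq q(\bar x)$ as well.

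No genuine obstacle remains: the real work — the polynomial encoding, the careful alignment of the order~$\le_\lex^2$ with the block structure of~$\cL_{p,q}$, and the explicit automatic witness of non-rigidity — is already packaged in Lemmas~\ref{L-word-automatic-1} and~\ref{L-word-automatic}. The theorem is therefore a direct corollary, obtained by composing that construction with Matiyasevich's theorem; the only point to double-check is that the automorphism~$f$ built in Lemma~\ref{L-word-automatic-1} (which acts non-trivially only on the single interval $\cI_{\bar y}$ where $p(\bar y)=q(\bar y)$) remains an automatic automorphism after the alphabet-to-$\{0,1\}$ recoding in Lemma~\ref{L-word-automatic}, which is clear because the recoding is a length-preserving letter substitution and hence preserves automaticity of relations.
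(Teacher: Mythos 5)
Your proposal is correct and follows essentially the same route as the paper: a reduction from the $\Pi^0_1$-complete set of pairs $(p,q)$ with $p(\bar x)\neq q(\bar x)$ everywhere, via Lemma~\ref{L-word-automatic} and the equivalence~\eqref{eq:auto1}, with scatteredness of $\cL_{p,q}$ handling both variants and the regular automorphism $h^\otimes$ from the lemma giving part~(ii). The paper's own proof is exactly this argument, stated more tersely.
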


\begin{proof}
  \begin{enumerate}[(i)]
  \item The set of pairs of polynomials $p,q\in\bN[\bar x]$ with
    $p(\bar y)\neq q(\bar y)$ for all $\bar y\in\bN^k$ is
    $\Pi^0_1$-complete~\cite{Mat93}. We reduce this to the first set
    in question: Let $p,q\in\bN[\bar x]$ and let $L$ be the regular
    language from Lemma~\ref{L-word-automatic}. Then
    $(L;\le_\lex^2)\cong\cL_{p,q}$ is rigid if and only if $p(\bar
    y)\neq q(\bar y)$ for all $\bar y\in\bN^k$ by \eqref{eq:auto1}.

    Note that this is even a reduction to the second set in question
    since the linear order $\cL_{p,q}$ is scattered.
  \item Be Lemma~\ref{L-word-automatic}, $\cL_{p,q}$ is rigid if and
    only if $(L;\le_\llex^2)$ is automatically rigid. Hence the above
    reduction also proves the two claims from (ii).
  \end{enumerate}
\end{proof}

\begin{corollary}\label{C-rigid-automatic}
  \begin{enumerate}[(i)]
  \item The set of word automatic presentations of rigid (rigid and
    scattered, resp.) linear orders is $\Pi^0_1$-hard.
  \item The set of word automatic presentations of automatically rigid
    (automatically rigid and scattered, resp.) linear orders is
    $\Pi^0_1$-complete.
  \end{enumerate}
\end{corollary}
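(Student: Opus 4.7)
The plan is to deduce both parts directly from Theorem~\ref{T-rigid-automatic} together with standard facts about automatic structures. The key observation is that any regular language $L\subseteq\{0,1\}^+\otimes\{0,1\}^+$ together with a fixed automaton~$\cB_0$ for $\le_\lex^2$ (constructed once and for all) yields, effectively from~$L$, a word automatic presentation of the linear order $(L;\le_\lex^2)$. Thus the map sending~$L$ to such a presentation is computable.

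For part~(i), I would apply this packaging to the reduction from Theorem~\ref{T-rigid-automatic}(i): given polynomials $p,q\in\bN[\bar x]$, build the regular language~$L$ of Lemma~\ref{L-word-automatic} and output the associated word automatic presentation of $(L;\le_\lex^2)\cong\cL_{p,q}$. By~\eqref{eq:auto1} this presents a (scattered) rigid linear order iff $p(\bar y)\neq q(\bar y)$ for all $\bar y\in\bN^k$, so the $\Pi^0_1$-hardness transfers. Part~(ii) is analogous, applying the same packaging to Theorem~\ref{T-rigid-automatic}(ii).

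For the matching upper bound in~(ii), I would show that automatic rigidity of a given word automatic structure is $\Pi^0_1$ by expressing its complement as a semi-decidable property: enumerate all finite automata~$\cB$ over the convolution alphabet and, for each one, decide whether the relation it defines is a non-trivial automorphism of the presented structure. Being a non-trivial automorphism — i.e.\ being a bijective, order-preserving function different from the identity — is a first-order property of the automatic structure augmented by the candidate relation, and hence decidable by decidability of first-order logic over automatic structures. So the non-existence of such a witness $\cB$ is $\Pi^0_1$. Checking that a presentation describes a linear order at all is a decidable first-order property; for the scattered variant one intersects with the decidable set of word automatic presentations of scattered linear orders from~\cite[Prop.~4.10]{KhoRS05}. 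The only (mild) obstacle is bookkeeping: fixing a uniform alphabet for the enumeration of candidate automorphism automata so that the entire procedure is uniform in the input presentation, which is straightforward since the universe alphabet is part of the presentation.
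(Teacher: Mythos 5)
Your proposal is correct and follows essentially the same route as the paper: the hardness claims are obtained by packaging the regular languages from Theorem~\ref{T-rigid-automatic} with a fixed automaton for $\le_\lex^2$ into word automatic presentations, and the $\Pi^0_1$ upper bound in (ii) comes from enumerating candidate automata and deciding, via first-order logic over automatic structures, whether each defines a non-trivial automorphism. The extra bookkeeping you mention (uniform alphabets, decidability of scatteredness via \cite{KhoRS05}) is consistent with, and slightly more explicit than, what the paper writes.
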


\begin{proof}
  The two claims from (i) are obvious consequences of
  Theorem~\ref{T-rigid-automatic}(i). Analogously, the two hardness
  claims from (ii) follow immediately from
  Theorem~\ref{T-rigid-automatic}(ii). 

  Now let $(L;\le)$ be an automatic linear order given by a word
  automatic presentation. Let $R\subseteq
  \Sigma^+\times\Sigma^+$. Then it can be expressed in first-order
  logic that $R$ is a non-trivial automorphism of $(L;\le)$. Hence,
  given a finite automaton $\cA$ for a regular language
  $R^\otimes\subseteq\Sigma^+\otimes\Sigma^+$, one can decide whether
  $R$ is a non-trivial automorphism of $(L;\le)$
  \cite{KhoN95}. Consequently, automatic rigidity of $(L;\le)$ is a
  $\Pi^0_1$-property.
\end{proof}

\subsection{Contextfree universe and \protect{$\le_\lex$}}

\'Esik initiated the investigation of linear orders of the form
$(L;\le_\lex)$ where $L$ is contextfree. Density of such a linear
order is undecidable \cite{Esi11}, the isomorphism problem is
$\Sigma^1_1$-complete~\cite{KusLL11} and their rank is bounded by
$\omega^\omega$ \cite{CarE12}.

We will show that rigidity of $(L;\le_\lex)$ is undecidable for
context-free languages~$L$. The proof uses the linear order
$\cL_{p,q}$ and constructs a deterministic context-free language $L'$
such that $(L';\le_\lex)\cong\cL_{p,q}$. This construction is a
variant of the construction in the proof of
Lemma~\ref{L-word-automatic-1}.

\begin{lemma}\label{L-cf}
  From polynomials $p,q\in\bN[x_1,\dots,x_k]$, one can construct a
  deterministic contextfree language $L'\subseteq\{0,1\}^+$ such that
  $(L';\le_\lex)\cong\cL_{p,q}$.
\end{lemma}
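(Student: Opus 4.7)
The plan is to mimic the construction of Lemma~\ref{L-word-automatic-1}, but to lay out what was the first and second coordinate of a convolution as a single string. The key device is a prefix-free lex-consistent binary encoding of $\bar x=(x_1,\dots,x_k)\in\bN^k$: set $u_{\bar x}=1^{x_1}01^{x_2}0\cdots 1^{x_k}0\in\{0,1\}^+$. Each $u_{\bar x}$ contains exactly $k$ zeros and ends with~$0$, from which one reads off that $\{u_{\bar x}\mid\bar x\in\bN^k\}$ is prefix-free; and with $0<1$, the first position where two distinct $u_{\bar x}$ and $u_{\bar y}$ differ carries a $0$ in the lex-smaller word, so $u_{\bar x}<_\lex u_{\bar y}$ iff $\bar x<_\lex\bar y$. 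Hence for any family $\{S(\bar x)\}_{\bar x\in\bN^k}$ of nonempty suffix sets, $\bigl(\bigcup_{\bar x}u_{\bar x}\cdot S(\bar x);\le_\lex\bigr)$ decomposes as $\sum_{\bar x\in(\bN^k;\le_\lex)}(S(\bar x);\le_\lex)$.

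I then design $S(\bar x)$ so that $(S(\bar x);\le_\lex)$ realises the block $(\underline{p(\bar x)}+\delta)\cdot\omega^*+(\underline{q(\bar x)}+\delta)\cdot\omega$. A leading~$0$ or~$1$ selects the $\omega^*$- or $\omega$-summand; block indices $0^m1$ (for $m\ge 1$, lex-decreasing in~$m$) and $1^n0$ (for $n\ge 1$, lex-increasing in~$n$) enumerate the individual copies; an inner marker~$0$ or~$1$ selects the $\underline{p(\bar x)}/\underline{q(\bar x)}$-summand or the $\delta$-summand. Elements of $\delta$ are encoded by words of $10^+1^+0$, which the preliminaries recall is isomorphic to~$\delta$ under~$\le_\lex$; elements of $\underline{p(\bar x)}$ are encoded as fixed-width binary codes of accepting runs of the reversed NFA~$\cA_p^R$ on $(a^{\bar x})^R=\cent a^{x_k}\cent\cdots\cent a^{x_1}$, and symmetrically for $\underline{q(\bar x)}$ via~$\cA_q^R$. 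Since NFA-reversal preserves the number of accepting runs on the reversed input, $\cA_p^R$ has $p(\bar x)$ such runs, encoded as $p(\bar x)$ distinct equal-length binary strings and hence lex-sorted into a linear order of the right size. A block-by-block comparison of suffixes in the style of the proof of Lemma~\ref{L-word-automatic-1} then yields $(L';\le_\lex)\cong\cL_{p,q}$.

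Determinism is witnessed by a DPDA that works in two phases. In the prefix phase it reads $u_{\bar x}$, pushing one symbol per letter and locating the end of the prefix by counting the $k$-th~$0$ (legal since $k$ is a fixed constant). In the suffix phase the outer marker, block index and inner marker are read deterministically; the $\delta$-branch simply checks the regular pattern $10^+1^+0$; the run branch reads the remainder in chunks of the fixed width $c=\lceil\log|\Delta_p^R|\rceil$, verifying chunk-by-chunk that the encoded transition is a legal successor in~$\cA_p^R$ and, simultaneously, popping one stack symbol whose identity must agree with that transition's input letter; acceptance requires that the portion of the stack pushed during the prefix phase be exactly exhausted at end of input, with the final transition leading to an accepting state of~$\cA_p^R$. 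The main obstacle is the synchronisation between this left-to-right push during the prefix and the left-to-right consumption of the virtual input during the run's verification: a LIFO pops in reverse order of filling. This mismatch is invisible in Lemma~\ref{L-word-automatic-1}, where convolution aligns the two coordinates position-by-position without passing through a stack. The reversal to~$\cA_p^R$ is precisely the fix, since the stack, filled while $u_{\bar x}$ is read left-to-right, pops in exactly the order $(a^{\bar x})^R$ on which $\cA_p^R$ has $p(\bar x)$ accepting runs. With this in place, the marker conventions eliminate every nondeterministic choice, so $L'$ is a genuine DCFL.
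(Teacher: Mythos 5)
Your proof is correct and rests on the same key idea as the paper's own: serialize the two components of the convolution from Lemma~\ref{L-word-automatic-1} into a prefix and a suffix, push the prefix onto the stack, and compensate for the LIFO reversal by reversing something --- the paper reverses the second component (forming $K'=\{u\$v^{rev}\mid u\otimes v\in K\}$ and simulating a DFA for $K^{rev}$ while popping, re-ordering the alphabet so the reversed $\delta$-code still gives $\delta$), whereas you reverse the NFAs $\cA_p,\cA_q$ instead and encode the blocks directly over $\{0,1\}$. These are interchangeable engineering choices, so your argument is essentially the paper's approach.
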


\begin{proof}
  Let $p,q\in\bN[x_1,\dots,x_k]$ be polynomials and let $K$ be the
  language from Lemma~\ref{L-word-automatic-1}. Then set
  \[
    K'=\{u\$v^{rev}\mid u\otimes v\in L\}
  \]
  where $v^{rev}$ is the reversal of the word~$v$. Then, from a
  deterministic finite automaton~$\cA$ accepting~$K^{rev}$, one can
  construct a deterministic pushdown automaton accepting~$K'$ (reading
  $u\$v$, it stores $u$ in the stack and, after reading $\$$,
  simulates $\cA$ while emptying the stack). Note that the alphabet of
  $K'$ is
  \[
     \Sigma'=\{\$\}\cup\Sigma=\{\$,a,\cent ,0,1,2,3\}\cup\Delta_p\cup\Delta_q\,.
  \]
  We order the alphabet~$\Sigma'$ by~$\le'$ such that
  \[
    \Delta_p\cup\Delta_q <' 0 <' 1 <' 3 <' 2 <' \cent  <' a <' \$\,.
  \]
  Compared to the proof of Lemma~\ref{L-word-automatic-1}, the order
  of $2$ and $3$ is inverted and $\$$ is made the new maximal element
  (we could have placed $\$$ anywhere). With $\le$ the order on
  $\Sigma$ from the proof of Lemma~\ref{L-word-automatic-1}, one
  effect of this definition is
  $(32^+3^+2^{rev};\le'_\lex)\cong(32^+3^+2;\le_\lex)\cong\delta$
  which will be used below.

  To show $(K';\le'_\lex)\cong\cL_{p,q}$, is suffices to prove
  $(K';\le'_\lex)\cong(K;\le_\lex^2)$. For this, recall that
  $(K;\le_\lex^2)$ is a sequence of the following blocks (for $\bar
  x\in\bN^k$ and $m\ge1$):
  \begin{itemize}
  \item $(a^{\bar x}0^m1\otimes\Run(\cA_p,a^{\bar x});\le_\lex^2)$:
    This linear order is finite of size $|\Run(\cA_p,a^{\bar
      x})|$. The same holds of the linear order
    \[
      (a^{\bar x}0^m1\$\{r^{rev}\mid r\in\Run(\cA_p,a^{\bar x})\};\le'_\lex)\,.
    \]
  \item $(a^{\bar x}1^m0\otimes\Run(\cA_q,a^{\bar x});\le_\lex^2)$: As
    above, this is isomorphic to
    \[
      (a^{\bar x}1^m0\$\{r^{rev}\mid r\in\Run(\cA_q,a^{\bar x})\};\le'_\lex)\,.
    \]
  \item $(a^{\bar x}b^m(1-b)\otimes 32^+3^+2;\le_\lex^2)$ (for
    $b\in\{0,1\}$) which is isomorphic to $\delta$. But $\delta$ is
    also isomorphic to
    \[
      (a^{\bar x}b^m(1-b)\$23^+2^+3;\le'_\lex)
    \]
    as we saw above.
  \end{itemize}
  It therefore follows that $(K;\le_\lex^2)$ and $(K',\le'_\lex)$ are
  isomorphic. The construction of $L'\subseteq\{0,1\}^+$ then follows
  the proof of Lemma~\ref{L-word-automatic}.
\end{proof}

Now we obtain, in the same way that we proved
Theorem~\ref{T-rigid-automatic}, the following result.

\begin{theorem}\label{T-rigid-cf}
  The set of contextfree languages $L\subseteq\{0,1\}^+$ such that
  $(L;\le_\lex)$ is rigid (is rigid and scattered, resp.), is
  $\Pi^0_1$-hard.
\end{theorem}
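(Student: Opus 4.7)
The plan is to mirror the argument of Theorem~\ref{T-rigid-automatic}(i), but replace the word-automatic encoding supplied by Lemma~\ref{L-word-automatic} with the deterministic context-free encoding supplied by Lemma~\ref{L-cf}. By Matiyasevich's theorem~\cite{Mat93}, the set of pairs of polynomials $p,q\in\bN[\bar x]$ satisfying $p(\bar y)\neq q(\bar y)$ for all $\bar y\in\bN^k$ is $\Pi^0_1$-complete, so it suffices to reduce this set to the set of context-free languages $L\subseteq\{0,1\}^+$ such that $(L;\le_\lex)$ is rigid.

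Given polynomials $p,q\in\bN[\bar x]$, I invoke Lemma~\ref{L-cf} to effectively produce a deterministic context-free language $L'\subseteq\{0,1\}^+$ with $(L';\le_\lex)\cong\cL_{p,q}$. By equation~\eqref{eq:auto1}, the linear order $\cL_{p,q}$ is rigid if and only if $p(\bar y)\neq q(\bar y)$ for all $\bar y\in\bN^k$. Consequently, $(L';\le_\lex)$ is rigid precisely on the $\Pi^0_1$-complete instances, which proves the first half of the claim.

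For the variant asking that $(L;\le_\lex)$ be \emph{rigid and scattered}, it is enough to note that $\cL_{p,q}$ is already scattered, as observed right after its definition: it is an $\omega$-sum of blocks each of which is a finite concatenation of $\omega$- and $\omega^*$-powers of scattered linear orders, so scatteredness is preserved. Thus the same map $(p,q)\mapsto L'$ serves as a reduction for this variant as well.

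There is really no obstacle here, because both nontrivial ingredients --- the encoding of $\cL_{p,q}$ by a deterministic context-free language on $\{0,1\}^+$ and the equivalence between rigidity of $\cL_{p,q}$ and inequality of the polynomials on $\bN^k$ --- have already been established in Lemma~\ref{L-cf} and equation~\eqref{eq:auto1}. The proof is then just their direct assembly along the lines of Theorem~\ref{T-rigid-automatic}(i).
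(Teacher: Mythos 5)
Your proposal is correct and follows exactly the route the paper intends: the paper's own proof of Theorem~\ref{T-rigid-cf} is literally ``in the same way that we proved Theorem~\ref{T-rigid-automatic}'', i.e.\ combine the $\Pi^0_1$-completeness of $\{(p,q)\mid \forall\bar y\in\bN^k: p(\bar y)\neq q(\bar y)\}$ with equation~\eqref{eq:auto1}, the scatteredness of $\cL_{p,q}$, and the context-free encoding from Lemma~\ref{L-cf}. You have simply written out the assembly that the paper leaves implicit.
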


\section{Isomorphisms and automorphisms of linear orders on trees}

In this section, we will show that the isomorphism of scattered and
tree automatic linear orders is undecidable. Furthermore, we will
prove that the existence of a non-trivial automorphism in this case is
$\Sigma^0_2$-hard. Both these results use (an improved version of) a
theorem by Krob~\cite{Kro94} that we discuss first.

\subsection{Weighted automata and Minsky machines}

A \emph{weighted automaton} is a tuple $\cA=(Q,\Sigma,\iota,\mu,F)$
where $Q$ is the finite set of states, $\Sigma$ the alphabet,
$\iota\in Q$ is the initial state, $F\subseteq Q$ is the set of
accepting states, and $\mu:Q\times\Sigma\times Q\to\{-\infty,0,1\}$ is
the weight function.

A \emph{run} of $\cA$ is a sequence
$\rho=)(q_0,a_1,q_1)\dots(q_{k-1},a_k,q_k)\in\Delta^+$ with
$q_{0}=\iota$, $\mu(q_{i-1},a_i,q_i)\in\{0,1\}$, and $q_k\in F$.  Its
\emph{label} is the word $a_1\dots a_k\in \Sigma^+$. By $\Run(\cA,w)$
we denote the set of runs labeled $w$ and $\Run(\cA)$ denotes the set
of all runs of~$\cA$. The \emph{weight}~$\wt(\rho)$ of the run~$\rho$
is the number of indices~$i$ with $\mu(q_{i-1},a_i,q_i)=1$. The
\emph{behaviour~$||\cA||$} of $\cA$ is the function from $\Sigma^+$ to
$\bN\cup\{-\infty\}$ that maps the word~$w$ to the maximal weight of a
run with label~$w$.

\begin{theorem}[cf.\ proof of \protect{\cite[Theorem~8.6]{DroK}}]\label{T-Krob}
  From a Minsky machine $\cM$, one can construct a weighted
  automaton~$\cA$ and a regular
  language~$\CTreg\subseteq(\Sigma\cdot\Box)^+$ such that, for any
  $m\in\bN$, the following are equivalent:
  \begin{enumerate}
  \item $m$ is not accepted by $\cM$.
  \item $||\cA||(u)>\frac12|u|$ for all $u\in\CTreg$ with $m=\max\{n\mid
    \$\Box(a\Box)^n\le_\pref u\}$.
  \end{enumerate}
  Furthermore, $||\cA||(u)\in\bN$ for all $u\in\CTreg$.
\end{theorem}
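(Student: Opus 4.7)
The plan is to follow Krob's reduction from Minsky-machine acceptance to the behavior of a $(\max,+)$ weighted automaton, as refined in~\cite{DroK}. First I would fix an encoding of configurations of~$\cM$: a configuration with state~$q$ and counter values $c,d$ is represented by a word of the form $q\,\$\,a^c\,\$\,a^d$ over a suitable alphabet~$\Sigma$, and a computation tableau by a concatenation $C_0 C_1 \cdots C_k$ of such configurations with delimiters. A padding letter~$\Box$ is interleaved between every two symbols so that the tableau lies in $(\Sigma\cdot\Box)^+$; this doubles the length of~$u$, guarantees that $\tfrac12|u|$ is an integer, and gives the counting arguments below enough room to be arranged cleanly. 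The language~$\CTreg$ collects exactly those padded words whose underlying string has the correct syntactic shape of a computation tableau (a state symbol beginning every configuration, the right number of~$\$$'s separating the two counter-blocks, an accepting state occurring only in the last configuration). All of this can be checked by a finite-state device, so $\CTreg$ is regular. The prefix condition $m=\max\{n\mid \$\Box(a\Box)^n\le_\pref u\}$ then selects exactly the candidate tableaux whose first counter starts at~$m$.

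Next I would build the weighted automaton~$\cA$ so that, on an input $u\in\CTreg$, its runs split into a ``default'' family of weight exactly $\tfrac12|u|$ together with several ``error-detecting'' families, one per type of possible defect of a candidate tableau: an incorrect initial configuration, a non-accepting terminal configuration, a bad state transition, or a counter mismatch between two consecutive configurations $C_i, C_{i+1}$. Each error-detecting family is engineered so that \emph{some} run yields weight strictly greater than $\tfrac12|u|$ precisely when the defect it guards against is present in~$u$. Taking the $(\max,+)$ behavior, we obtain $||\cA||(u)>\tfrac12|u|$ if and only if $u$ contains at least one such defect, i.e.\ iff $u$ is not a valid accepting tableau of~$\cM$ on input~$m$. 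Because all honest runs give non-negative integer weights at most~$|u|$, we also get $||\cA||(u)\in\bN$ for every $u\in\CTreg$. Combining these, $m$ is not accepted by~$\cM$ iff no $u\in\CTreg$ with first counter~$m$ is a valid tableau iff every such~$u$ satisfies $||\cA||(u)>\tfrac12|u|$, which is the desired equivalence.

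The main obstacle is the counter-consistency check: comparing the unboundedly large integers $|C_i|_a$ and $|C_{i+1}|_a$ cannot be done by any finite automaton, but it can be done with $(\max,+)$ weights by Krob's trick. For each pair of consecutive configurations and each counter-transition rule of~$\cM$, one introduces two runs whose weights are affine combinations of $|C_i|_a$, $|C_{i+1}|_a$, and the lengths of the surrounding portions of~$u$, calibrated so that the maximum of their weights equals $\tfrac12|u|$ precisely when the two counters are related as the rule prescribes, and strictly exceeds $\tfrac12|u|$ otherwise. Since $\cM$ has only finitely many transition rules and the guess ``here is the offending step'' can be made non-deterministically at the start, all of these families merge into a single weighted automaton over~$\Sigma$ with weight function into $\{-\infty,0,1\}$, completing the construction.
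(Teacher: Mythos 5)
Your proposal is correct and takes essentially the same route as the paper: the paper's own proof simply takes the function $r$ and the automaton from the cited proof of \cite[Theorem~8.6]{DroK} and interleaves the padding symbol $\Box$ so that transition weights from $\{-\infty,0,1\}$ suffice where \cite{DroK} also used weight $2$ --- which is exactly the role you assign to the $\Box$-padding. Your sketch reconstructs the underlying Krob-style tableau encoding, error-detecting run families, and counter-comparison trick in more detail than the paper, which defers all of that to the cited reference.
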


\begin{proof}
  This is a slight adaptation of the proof of
  \cite[Theorem~8.6]{DroK}. If $r$ is the function defined there, we
  construct the weighted automaton~$\cA$ such that
  \[
    ||\cA||(u)=
    \begin{cases}
      r(a_1a_2\dots a_k) & \text{ if }u=a_1\Box\,a_2\Box\,\dots\,a_k\Box\\
      0 & \text{ otherwise.}
    \end{cases}
  \]
  The reason for this modification is that here, transition weights
  are from $\{-\infty,0,1\}$ while, in \cite{DroK}, we also used the
  transition weight~$2$.  
\end{proof}

From the weighted automaton $\cA$, one can then construct (cf.\
\cite{DroKV09,DroK}) weighted automata~$\cA_\cM$ on the alphabet
$\Sigma$ and $\cB_\cM$ on the alphabet $\Sigma_{\#}^2$ such that
\begin{eqnarray}
  ||\cA_\cM||(u)&=&\max(\lfloor \textstyle\frac{|u|}{2}\rfloor+1,||\cA||(u))\text{ and }\label{eq:def-AM}\\
  ||\cB_\cM||(x)&=&
  \begin{cases}
    ||\cA||(u) & \text{if }
    \begin{array}[t]{l}
       x=u\otimes\$\Box(a\Box)^m, u\in\CTreg,\\\text{and }
       m=\max\{n\mid \$\Box(a\Box)^n\le_\pref u\}\\
     \end{array}\\
     ||\cA_\cM||(u) & \text{if }
     \begin{array}[t]{l}
       x=u\otimes\$\Box(a\Box)^m\text{ and}\\
       (u\notin\CTreg\text{ or }
           m\neq\max\{n\mid \$\Box(a\Box)^n\le_\pref u\})
   \end{array}\\
     0 & \text{otherwise}
  \end{cases}\label{eq:def-BM}
\end{eqnarray}
for all $u\in\Sigma^+$ and $x\in(\Sigma_{\#}^2)^+$.

For $m\in\bN$, we define the function $r_{\cM,m}:\Sigma^+\to\bN$ by
$r_{\cM,m}(u)=||\cB_\cM||(u\otimes \$\Box(a\Box)^m\}$.  This is
well-defined since, for any $u\in\Sigma^+$ and $m\in\bN$, we have
$||\cA||(u)\in\bN$ and therefore also
$||\cB_\cM||(u\otimes\$\Box(a\Box)^m)\in\bN$. In other words, we have
\begin{equation}
  \label{eq:r-Mm}
  r_{\cM,m}(u)=
  \begin{cases}
    ||\cA||(u) & \text{ if } u\in\CTreg\text{ and }
                    m=\max\{n\mid \$\Box(a\Box)^n\le_\pref u\}\\
    ||\cA_\cM||(u) & \text{ otherwise}
  \end{cases}
\end{equation}

\begin{proposition}\label{P-weighted-automata}
  For all $m\in\bN$, the following are equivalent:
  \begin{enumerate}
  \item $m$ is not accepted by the Minsky machine~$\cM$.
  \item $||\cA_\cM||(u)=r_{\cM,m}(u)$ holds
    for all $u\in\Sigma^*$.
  \end{enumerate}
\end{proposition}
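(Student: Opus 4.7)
The strategy is to unfold the case distinctions in the definitions \eqref{eq:def-AM} of $||\cA_\cM||$ and \eqref{eq:r-Mm} of $r_{\cM,m}$, and then apply Theorem~\ref{T-Krob}. Fix $m \in \bN$ and, for convenience, call a word $u \in \Sigma^+$ \emph{$m$-critical} if $u \in \CTreg$ and $m = \max\{n \mid \$\Box(a\Box)^n \le_\pref u\}$. I will show that condition~(2) holds automatically on non-$m$-critical words, while on $m$-critical words it is exactly the Krob condition.

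First, if $u$ is not $m$-critical, then \eqref{eq:r-Mm} gives $r_{\cM,m}(u) = ||\cA_\cM||(u)$ by definition, so the equality in~(2) is trivial and imposes no condition on the Minsky machine~$\cM$.

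Next, suppose $u$ is $m$-critical. Then $r_{\cM,m}(u) = ||\cA||(u)$ while $||\cA_\cM||(u) = \max\bigl(\lfloor |u|/2 \rfloor + 1,\, ||\cA||(u)\bigr)$. Hence $||\cA_\cM||(u) = r_{\cM,m}(u)$ iff $||\cA||(u) \ge \lfloor |u|/2 \rfloor + 1$. Since by Theorem~\ref{T-Krob} we have $||\cA||(u) \in \bN$, a short case distinction on the parity of $|u|$ shows that this inequality is in turn equivalent to $||\cA||(u) > |u|/2$.

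Combining the two cases, condition~(2) of the proposition is equivalent to the assertion ``$||\cA||(u) > \tfrac12|u|$ for every $m$-critical $u$'', which is precisely condition~(2) of Theorem~\ref{T-Krob}. That theorem then yields equivalence with $m \notin L(\cM)$, as required. The only non-bookkeeping step is the integer inequality $||\cA||(u) \ge \lfloor |u|/2 \rfloor + 1 \Leftrightarrow ||\cA||(u) > |u|/2$, which is routine, so no serious obstacle is expected; the proposition is essentially a repackaging of Theorem~\ref{T-Krob} through the two automata $\cA_\cM$ and $\cB_\cM$.
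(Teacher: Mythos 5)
Your proof is correct and follows essentially the same route as the paper: split on whether $u$ is ``$m$-critical'', observe that equality is automatic otherwise, and reduce the critical case to condition~(2) of Theorem~\ref{T-Krob} via the integer inequality $||\cA||(u)\ge\lfloor|u|/2\rfloor+1\iff||\cA||(u)>\tfrac12|u|$ (the paper handles the two directions separately and uses that words in $\CTreg$ have even length, but your parity case distinction covers the same ground).
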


\begin{proof}
  Suppose that $m$ is not accepted by the Minsky machine~$\cM$. Let
  $u\in\Sigma^*\setminus\CTreg$ or $m\neq\max\{n\mid
  \$\Box(a\Box)^n\le_\pref u\}$. Then \eqref{eq:r-Mm} immediately
  implies $||\cA_\cM||(u)=r_{\cM,m}(u)$. Next let $u\in\CTreg$ and
  $m=\max\{n\mid \$\Box(a\Box)^n\le_\pref u\}$.  Then, by
  Theorem~\ref{T-Krob}, we have $||\cA||(u)>\frac12|u|$ implying
  $||\cA||(u)\ge\lfloor\frac{1}{2}|u|\rfloor+1$ and therefore
  $||\cA_\cM||(u)=||\cA||(u)=r_{\cM,m}(u)$ by~\eqref{eq:def-AM}
  and~\eqref{eq:r-Mm}.

  For the other direction, assume that $m$ is accepted by the Minsky
  machine~$\cM$. Then, by Theorem~\ref{T-Krob}, there exists a word
  $u\in\CTreg$ with $m=\max\{n\mid \$\Box(a\Box)^n\le_\pref u\}$ such
  that $||\cA||(u)\le\frac12|u|$. Note that the word~$u$ has even
  length since $u\in\CTreg$. Hence
  $||\cA||(u)<\lfloor\frac12|u|\rfloor+1$. Consequently
  $||\cA_\cM||(u)=\lfloor\frac12|u|\rfloor+1>||\cA||(u)=r_{\cM,m}(u)$
  by~\eqref{eq:def-AM} and~\eqref{eq:r-Mm}.
\end{proof}

\subsection{Isomorphism}

For a function $r:\Sigma^+\to\bN$, we set
\[
    \cL_r=\sum_{w\in(\Sigma^+;\le_\llex)}(\omega^{r(w)+1}+\delta)\,.
\]
Since $(\Sigma^+;\le_\llex)\cong\omega$, this linear order is an
$\omega$-sequence of ordinals, separated by our delimiter~$\delta$. Hence
it is scattered. Furthermore, we obtain
\begin{equation}
  \label{eq:iso1}
  \cL_r\cong \cL_{r'}\iff r=r'
\end{equation}
for all functions $r,r':\Sigma^+\to\bN$. 

\begin{lemma}\label{L-weighted-automata}
  From a weighted automaton $\cA$, one can compute a regular set of
  trees $L_\cA$ such that $(L_\cA;\le_\trees)\cong\cL_{||\cA||}$.
\end{lemma}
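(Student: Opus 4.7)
The plan is to construct $L_\cA$ so that, for each $w\in\Sigma^+$, there is a ``block'' of trees of order type $\omega^{||\cA||(w)+1}+\delta$, with blocks for distinct $w$'s ordered by length-lex. All trees in the block for $w$ share a common main branch encoding $w$, so the outer $\sum_{w\in(\Sigma^+;\le_\llex)}$ is reproduced by length-lex comparison on main branches; within a block the trees differ only in their side-tree tuples $\tau(t)$, which are compared lexicographically.

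For the $\omega^{||\cA||(w)+1}$ piece I introduce one tree for each triple $(m,\rho,\bar n)$ where $m\in\bN$, $\rho$ is an accepting run of $\cA$ on $w$ with weight $r$, and $\bar n\in\bN^r$. I arrange the side trees so that $m$ is compared before the $(\rho,\bar n)$ data (for instance, by placing an empty ``type marker'' at the first side-tree position and a chain of length $m+1$ at the second, then encoding the transition $\tau_i$ of $\rho$, together with a chain of length $n_j+1$ at the weight-$1$ positions, at the remaining positions). The crucial ordinal-arithmetic step is: for fixed $w$ and $m$ the total contribution over all runs $\rho$ is an ordinal $\alpha_{w,m}$ with $\omega^R\le\alpha_{w,m}<\omega^{R+1}$, where $R:=||\cA||(w)$, since each run contributes an ordinal of type $\omega^r\le\omega^R$ and there are only finitely many runs, at least one of maximum weight $R$. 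Summing $\alpha_{w,m}$ over $m\in\omega$ then yields $\omega^{R+1}$: the partial sums after $M$ steps exceed $\omega^R\cdot(M+1)$ yet always stay below $\omega^{R+1}$, forcing the supremum to equal $\omega^R\cdot\omega=\omega^{R+1}$. For the $\delta$ piece, I use trees whose first side tree is a nonempty tree from $D\cong\delta$ (cf.\ page~\pageref{page-definition-D}); since the first side tree of every $\omega^{R+1}$-tree is empty, every $\delta$-tree is $\le_\trees$-above every $\omega^{R+1}$-tree in the same $w$-block, and the induced $\le_\trees$-order on the $\delta$-trees is just $\le_\trees$ on $D$.

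The main obstacle is regularity of $L_\cA$: a bottom-up tree automaton must verify that the transitions encoded in the side trees form an accepting run of $\cA$ on the main-branch word~$w$. This is handled in the usual way by propagating the current state of $\cA$ upward along the main branch while extracting from each side tree the transition label (and, at weight-$1$ positions, the chain encoding $n_j$). Since accepting runs of $\cA$, chain encodings of naturals, and the set $D$ are all regular, $L_\cA$ is regular, and the isomorphism $(L_\cA;\le_\trees)\cong\sum_{w\in(\Sigma^+;\le_\llex)}(\omega^{||\cA||(w)+1}+\delta)=\cL_{||\cA||}$ follows from the block-wise analysis above.
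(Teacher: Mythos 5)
Your construction is essentially the paper's own: run trees whose main branch carries $w$ (so blocks are ordered by $\le_\llex$ on main branches), whose side branches at weight-$1$ positions encode tuples of naturals (giving $\omega^{\wt(\rho)}$ per run, squeezed between $\omega^{||\cA||(w)}$ and the natural sum $\bigoplus_{\rho}\omega^{\wt(\rho)}<\omega^{||\cA||(w)+1}$), an extra most-significant parameter whose summation over $\omega$ produces $\omega^{||\cA||(w)+1}$, and trees built from $D$ appended at the top of each block for the $\delta$-summand. The paper follows exactly this plan (its $n(t)$ plays the role of your $m$, and it likewise bounds each block above by the natural sum over runs and below via a maximum-weight run), so the proposal is correct and not a genuinely different route.
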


Before we prove this lemma, we show how we can use it to prove that
the isomorphism problem of scattered tree automatic linear orders is
undecidable (the proof of Lemma~\ref{L-weighted-automata} can be found
following the proof of Corollary~\ref{C-Iso-tree-automatic}).

\begin{lemma}\label{L-tree-automatic-1}
  From a Minsky machine $\cM$ and $m\in\bN$, one can compute a regular
  set of trees $L$ such that $(L;\le_\trees)\cong\cL_{r_{\cM,m}}$.
\end{lemma}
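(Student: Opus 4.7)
The plan is to reduce this lemma to Lemma~\ref{L-weighted-automata} by constructing, from $\cM$ and $m$, a weighted automaton $\cB_{\cM,m}$ on the alphabet~$\Sigma$ whose behaviour equals the function~$r_{\cM,m}$. Once such a weighted automaton is in hand, Lemma~\ref{L-weighted-automata} directly yields a regular set of trees~$L$ with
\[
   (L;\le_\trees)\cong\cL_{||\cB_{\cM,m}||}=\cL_{r_{\cM,m}}\,.
\]

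To build $\cB_{\cM,m}$, recall from \eqref{eq:r-Mm} that $r_{\cM,m}(u)=||\cB_\cM||(u\otimes\$\Box(a\Box)^m)$, where $\cB_\cM$ runs over the alphabet $\Sigma_\#^2$. The word $\$\Box(a\Box)^m$ has fixed length $2m+2$, so in the convolution $u\otimes\$\Box(a\Box)^m$ the second coordinate of the $i$-th letter is a known symbol for $1\le i\le 2m+2$ and is $\#$ for $i>2m+2$. The states of $\cB_{\cM,m}$ are pairs $(q,i)$ where $q$ is a state of~$\cB_\cM$ and $i\in\{1,\dots,2m+2,\infty\}$ is a counter tracking the position in the fixed second track (capped at $\infty$). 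Reading a letter $a\in\Sigma$ in state $(q,i)$, the automaton $\cB_{\cM,m}$ simulates the transition of $\cB_\cM$ reading the letter $(a,c_i)$ where $c_i$ is the $i$-th letter of $\$\Box(a\Box)^m$ if $i\le 2m+2$ and $c_i=\#$ if $i=\infty$; the counter is incremented (and capped at $\infty$ once it exceeds $2m+2$). The transition weights are inherited from~$\cB_\cM$. By construction, a run of $\cB_{\cM,m}$ on $u$ corresponds bijectively and weight-preservingly to a run of $\cB_\cM$ on $u\otimes\$\Box(a\Box)^m$, so $||\cB_{\cM,m}||(u)=r_{\cM,m}(u)$ for every $u\in\Sigma^+$.

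Feeding $\cB_{\cM,m}$ into Lemma~\ref{L-weighted-automata} produces the desired regular tree language~$L$. The construction of~$\cB_{\cM,m}$ is effective, since both $\cB_\cM$ and $m$ are given, and the step just consists of hard-coding a fixed-length prefix and then $\#$'s into the second track.

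The only real obstacle in this lemma is the appeal to Lemma~\ref{L-weighted-automata}, whose proof is deferred; the rest is a routine product construction to absorb the convolution with a fixed word into the states of a weighted automaton. In particular, the fact that $||\cA||(u)\in\bN$ (and hence $r_{\cM,m}(u)\in\bN$) for all relevant $u$, guaranteed by Theorem~\ref{T-Krob} and the definition of $\cA_\cM$, ensures that $\cL_{r_{\cM,m}}$ is well-defined as an $\omega$-sum of ordinals $\omega^{r_{\cM,m}(w)+1}+\delta$, so no compatibility issue arises when invoking Lemma~\ref{L-weighted-automata}.
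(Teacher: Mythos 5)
Your proposal follows the paper's proof exactly: the paper likewise constructs a weighted automaton $\cB_{\cM,m}$ over $\Sigma$ with $||\cB_{\cM,m}||(u)=||\cB_\cM||(u\otimes\$\Box(a\Box)^m)=r_{\cM,m}(u)$ and feeds it into Lemma~\ref{L-weighted-automata}, merely asserting (rather than detailing) the product construction you spell out. The only imprecision in your added detail is the claimed bijective, weight-preserving correspondence of runs, which breaks for the finitely many words $u$ with $|u|<2m+2$ (there the convolution $u\otimes\$\Box(a\Box)^m$ is longer than $u$); this is easily patched, e.g.\ by treating these finitely many words separately or by building $\cB_{\cM,m}$ directly from \eqref{eq:r-Mm} as a disjoint union of restrictions of $\cA$ and $\cA_\cM$ to complementary regular languages.
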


\begin{proof}
  Let $\cM$ be a Minsky machine and let $m\in\bN$.  Let $\cB_\cM$ be
  the weighted automaton constructed following
  Theorem~\ref{T-Krob}. Then, from $m\in\bN$, we can compute a
  weighted automaton $\cB_{\cM,m}$ with alphabet $\Sigma$ such that
  \[
     ||\cB_{\cM,m}||(u)=||\cB_\cM||(u\otimes\$\Box(a\Box)^m)
        \text{ for all }u\in\Sigma^+\,.
  \]
  But then $||\cB_{\cM,m}||=r_{\cM,m}$.  By
  Lemma~\ref{L-weighted-automata}, we can compute, from $m\in\bN$, a
  regular language of trees $L$ such that
  $(L;\le_\trees)\cong\cL_{||\cB_{\cM,m}||}=\cL_{r_{\cM,m}}$.
\end{proof}

\begin{theorem}\label{T-Iso-tree-automatic}
  There is a scattered linear order $\cL$ such that the set of regular
  tree languages~$L$ with $(L;\le_\trees)\cong\cL$ is $\Pi^0_1$-hard.
\end{theorem}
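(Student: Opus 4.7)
The plan is to fix a universal Minsky machine $\cM$, take $\cL=\cL_{||\cA_\cM||}$ as the witness, and reduce the complement of the acceptance problem of $\cM$ (which is $\Pi^0_1$-hard by universality) to the set of regular tree languages isomorphic to~$\cL$. All the heavy lifting is supplied by the preceding lemmas; what remains is to assemble the pieces correctly and to check that $\cL$ is a legitimate scattered linear order.

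First I would verify that $\cL=\cL_{||\cA_\cM||}$ is well-defined and scattered. By \eqref{eq:def-AM}, $||\cA_\cM||(u)\ge\lfloor|u|/2\rfloor+1\ge 1$ for every $u\in\Sigma^+$, so $||\cA_\cM||$ is a function from $\Sigma^+$ to $\bN$, making $\cL_{||\cA_\cM||}$ a well-defined $\omega$-sum of the scattered orders $\omega^{||\cA_\cM||(w)+1}+\delta$, and hence itself scattered.

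Next comes the reduction. Given $m\in\bN$, apply Lemma~\ref{L-tree-automatic-1} to obtain a regular tree language $L_m$ with $(L_m;\le_\trees)\cong\cL_{r_{\cM,m}}$. I claim that $m$ is not accepted by $\cM$ if and only if $(L_m;\le_\trees)\cong\cL$. Indeed, by Proposition~\ref{P-weighted-automata}, non-acceptance of $m$ is equivalent to the pointwise equality $||\cA_\cM||=r_{\cM,m}$ of functions $\Sigma^+\to\bN$. By \eqref{eq:iso1}, this is in turn equivalent to $\cL_{||\cA_\cM||}\cong\cL_{r_{\cM,m}}$, i.e.\ to $\cL\cong(L_m;\le_\trees)$. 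Since $m\mapsto L_m$ is computable, this gives a many-one reduction of the $\Pi^0_1$-hard set $\{m\in\bN\mid m\text{ is not accepted by }\cM\}$ to the set $\{L\mid L\text{ regular tree language, }(L;\le_\trees)\cong\cL\}$.

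There is no real obstacle beyond bookkeeping: the proposition and the two lemmas already provide the exact interface we need, and the only subtle point is the choice of $\cL$ as $\cL_{||\cA_\cM||}$ rather than, say, $\cL_{r_{\cM,0}}$, so that the target linear order is independent of the instance~$m$ of the reduction. (The actual construction of a regular tree language presenting $\cL$ itself is not needed — we only need to recognize $\cL$ up to isomorphism, and the instances $L_m$ do this job.)
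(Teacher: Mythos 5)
Your proposal is correct and follows essentially the same route as the paper's proof: choose $\cL=\cL_{||\cA_\cM||}$ for a suitable Minsky machine $\cM$, and combine Proposition~\ref{P-weighted-automata}, \eqref{eq:iso1}, and Lemma~\ref{L-tree-automatic-1} into a many-one reduction from the ($\Pi^0_1$-hard) non-acceptance problem of $\cM$. Your additional check that $||\cA_\cM||$ maps into $\bN$ (so that $\cL$ is a well-defined scattered order) is a sensible piece of bookkeeping that the paper leaves implicit.
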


\begin{proof}
  Let $P\subseteq\bN$ be some $\Pi^0_1$-complete set. Then there
  exists a Minsky machine $\cM$ that accepts the set $\bN\setminus
  P$. Let $\cA_\cM$ and $\cB_\cM$ be the weighted automata constructed
  following Theorem~\ref{T-Krob}. Then we get
  \begin{align}
    m\in P & \iff m\text{ is not accepted by }\cM\nonumber\\
           & \stackrel{\text{Prop.}~\ref{P-weighted-automata}}\iff 
               ||\cA_\cM||(u)=r_{\cM,m}(u)\text{ for all }u\in\Sigma^+
                  \label{eq:iso}\\
           & \iff \cL_{||\cA_\cM||}\cong\cL_{r_{\cM,m}} \nonumber
  \end{align}
  where the last equivalence follows from~\eqref{eq:iso1}. Hence, by
  Lemma~\ref{L-tree-automatic-1}, we can reduce the $\Pi^0_1$-complete
  set~$P$ to the set of regular tree languages~$L$ with
  $(L;\le_\trees)\cong\cL_{||\cA_\cM||}$. The theorem therefore holds
  with $\cL=\cL_{||\cA_\cM||}$.
\end{proof}

Since the linear order $\le_\trees$ is tree automatic, we immediately
obtain

\begin{corollary}\label{C-Iso-tree-automatic}
  There is a scattered linear order $\cL$ whose set of tree automatic
  presentations is $\Pi^0_1$-hard. 
\end{corollary}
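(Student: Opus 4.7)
The plan is to obtain this corollary as an essentially immediate consequence of Theorem~\ref{T-Iso-tree-automatic}, by observing that when the order relation $\le_\trees$ is fixed and tree automatic, producing a tree automatic presentation of $(L;\le_\trees)$ is a computable operation on the automaton defining~$L$. So the reduction established in Theorem~\ref{T-Iso-tree-automatic} lifts automatically to a reduction on tree automatic presentations.

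More concretely, I would take the scattered linear order $\cL$ provided by Theorem~\ref{T-Iso-tree-automatic} and define the many-one reduction as follows. Given a tree automaton $\cA$ for a regular tree language~$L$, one can effectively construct a tree automaton $\cA_{\le}$ accepting the convolutions $s\otimes t$ with $s,t\in L$ and $s\le_\trees t$: recall from the preliminaries that $\le_\trees$ is a fixed tree automatic relation, so we fix once and for all a tree automaton $\cA_{\mathrm{ord}}$ for $\le_\trees^\otimes$; from $\cA$ we build the automaton for $L^\otimes\times L^\otimes$ restricted to the diagonal of convolutions, intersect it with $\cA_{\mathrm{ord}}$, and the result is an automaton for $(\le_\trees\cap(L\times L))^\otimes$. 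The pair $(\cA,\cA_{\le})$ is by definition a tree automatic presentation of the linear order $(L;\le_\trees)$. This construction is uniform and computable in $\cA$.

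Now the reduction of Theorem~\ref{T-Iso-tree-automatic} produces, from an input to a $\Pi^0_1$-complete problem, a tree automaton $\cA$ for a language~$L$ with the property that the input lies in the problem if and only if $(L;\le_\trees)\cong\cL$. Composing this with the construction above yields, from the same input, a tree automatic presentation $(\cA,\cA_{\le})$ of $(L;\le_\trees)$, and this presentation presents $\cL$ exactly when the input is positive. Hence the set of tree automatic presentations of $\cL$ is $\Pi^0_1$-hard.

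I do not expect any real obstacle here; the only point that requires a line of care is checking that the construction of $\cA_{\le}$ from $\cA$ really is effective, which follows from the closure of regular tree languages under intersection together with the fact that $\le_\trees$ is a fixed tree automatic relation, both already noted in the preliminaries.
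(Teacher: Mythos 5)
Your proposal is correct and matches the paper's argument: the paper derives the corollary from Theorem~\ref{T-Iso-tree-automatic} with the single remark that $\le_\trees$ is tree automatic, which is exactly the effective construction of $(\cA,\cA_{\le})$ that you spell out. No issues.
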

One immediately gets that the isomorphism problem for tree automatic
scattered linear orders is $\Pi^0_1$-hard. We do not know whether the
set of tree automatic presentations of \emph{scattered} linear orders
is decidable. Therefore, the following immediate consequence of
Corollary~\ref{C-Iso-tree-automatic} is a bit stronger:
\begin{corollary}
  Let $X$ be a set of pairs of tree automatic presentations such that,
  for all tree automatic presentations $P_1$ and $P_2$ of scattered
  linear orders $\cL_1$ and $\cL_2$, one has
  \[
    (P_1,P_2)\in X\iff \cL_1\cong\cL_2\,.
  \]
  Then $X$ is $\Pi^0_1$-hard.
\end{corollary}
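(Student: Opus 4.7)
The plan is to exploit the fact that the reduction underlying Corollary~\ref{C-Iso-tree-automatic} produces only presentations of scattered linear orders, so the hypothesis on $X$ (which only fixes its behaviour on pairs of scattered presentations) suffices to transport the hardness.

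First, I would fix the scattered linear order $\cL=\cL_{||\cA_\cM||}$ from the proof of Theorem~\ref{T-Iso-tree-automatic} together with one explicit tree automatic presentation $P^*$ of it; such a $P^*$ is produced by Lemma~\ref{L-weighted-automata} applied to the weighted automaton $\cA_\cM$, yielding a regular tree language $L^*$ with $(L^*;\le_\trees)\cong\cL$, equipped with the (fixed, regular) relation $\le_\trees$. Note that $\cL$ is scattered because each summand $\omega^{r(w)+1}+\delta$ is scattered and an $\omega$-sum of scattered orders is scattered.

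Next, I would fix a $\Pi^0_1$-complete set $P\subseteq\bN$ and let $\cM$ be a Minsky machine whose complement of the accepted set equals $P$, as in the proof of Theorem~\ref{T-Iso-tree-automatic}. For input $m\in\bN$, apply Lemma~\ref{L-tree-automatic-1} to compute a regular tree language $L_m$ with $(L_m;\le_\trees)\cong\cL_{r_{\cM,m}}$; this yields a tree automatic presentation $P_m$ of the scattered linear order $\cL_{r_{\cM,m}}$ (scattered for the same reason as $\cL$). By the chain of equivalences \eqref{eq:iso} in the proof of Theorem~\ref{T-Iso-tree-automatic}, we have $m\in P\iff \cL_{r_{\cM,m}}\cong\cL$. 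The reduction is then $m\mapsto(P_m,P^*)$.

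Finally I would verify that this is a valid many-one reduction of $P$ to $X$: both $P_m$ and $P^*$ are tree automatic presentations of \emph{scattered} linear orders, so the defining property of $X$ applies and gives $(P_m,P^*)\in X\iff \cL_{r_{\cM,m}}\cong\cL\iff m\in P$. This establishes $\Pi^0_1$-hardness of $X$. The only delicate point is the one already flagged in the remark preceding the corollary, namely that we do not know how to decide whether a tree automatic presentation encodes a scattered linear order; the reduction sidesteps this issue because, by construction, every presentation we hand to $X$ provably falls in the class on which $X$'s behaviour is pinned down.
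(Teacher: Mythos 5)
Your proof is correct and is exactly the argument the paper intends: the paper states this corollary as an ``immediate consequence'' of Corollary~\ref{C-Iso-tree-automatic}, and your write-up simply unpacks that, using the fixed presentation $P^*$ of $\cL_{||\cA_\cM||}$ and the computable family $P_m$ of presentations of $\cL_{r_{\cM,m}}$, both provably scattered, so that the hypothesis on $X$ pins down its value on every pair the reduction produces. Your closing remark correctly identifies why the possible undecidability of scatteredness is irrelevant here.
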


The rest of this section is devoted to the proof of
Lemma~\ref{L-weighted-automata}.

\begin{proof}[Proof of Lemma~\ref{L-weighted-automata}]
  Let $\cA=(Q,\Sigma,\iota,\mu,F)$ be a weighted automaton. We will
  construct a tree automatic presentation of the linear
  order~$\cL_{||\cA||}$.

  A \emph{run tree of $\cA$} is a tree $t$ over the alphabet
  $\Sigma\uplus\{\$\}$ such that there exist states
  $\iota=q_0,q_1,\dots,q_{k-1}\in Q$ and $q_k\in F$ (with
  $k=\max\{i\mid 0^{i+1}\in\dom(t)\}$) with the following properties:
  \begin{enumerate}[(T1)]
  \item $11\in\dom(t)\subseteq 0^*\cup 0^*10^*\cup 110^*$ and
    $100\notin\dom(t)$
  \item $t(0^i)\in\Sigma$ and $\mu(q_{i-1},t(0^i),q_i)\neq-\infty$ for
    all $1\le i\le k$
  \item $0^i1\in\dom(t)$ implies $1\le i\le k$ and
    $\mu(q_{i-1},a_i,q_i)=1$ or $i=0$
  \item $t^{-1}(\$)=\dom(t)\setminus \{0^i\mid 1\le i\le k\}$
  \end{enumerate}
  Note that every run tree $t$ defines a word over $\Sigma$, namely
  \[
    \word(t)=t(0)\,t(00)\,\dots\, t(0^k)\,.
  \]
  Since $11\in\dom(t)$, also $1$ and therefore $0$ belong to
  $\dom(t)$. Hence $\word(t)\neq\varepsilon$. Fig.~\ref{fig:run-tree}
  shows a run tree~$t$ with $\word(t)=abaab$ (we omitted the label
  $\$$ in the figure). The idea is that the ``main branch''
  $\{0,00,\dots,0^k\}$ carries a run~$\rho$ of the weighted
  automaton~$\cA$. The number of ``side branches'' starting in some
  node~$0^i1$ with $i>0$ is at most the weight~$\wt(\rho)$ of the
  encoded run. Since these side branches have arbitrary length, the
  whole run tree stands for an element of $\omega^{\wt(\rho)}$. The
  ``side branch'' starting in $11$ plays a special role, its length
  $|\dom(t)\cap 110^+|$ is denoted $n(t)$ (the run tree $t$ in
  Fig.~\ref{fig:run-tree} satisfies $n(t)=2$).

  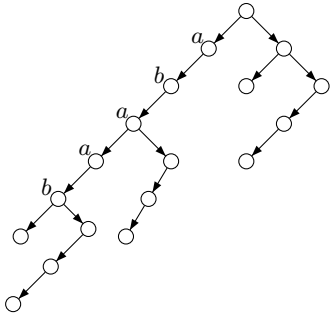
\begin{figure}
    \centering
    {
      \begin{picture}(50,50)(-20,5)
        \gasset{Nw=2,Nh=2,ExtNL=y,NLangle=135}
        \node(1)(15,50){} \node(2)(10,45){$a$} \node(3)(5,40){$b$}
        \node(4)(0,35){$a$} \node(5)(-5,30){$a$} \node(6)(-10,25){$b$} 
        \node(7)(-15,20){}
        \drawedge(1,2){} \drawedge(2,3){} \drawedge(3,4){} \drawedge(4,5){}
        \drawedge(5,6){} \drawedge(6,7){}

        \node(8)(20,45){} \node(9)(15,40){}
        \node(10)(25,40){} \node(11)(20,35){} \node(12)(15,30){}
        \drawedge(1,8){} \drawedge(8,9){} \drawedge(8,10){}
        \drawedge(10,11){} \drawedge(11,12){}

        \node(13)(5,30){} \node(14)(2,25){} \node(15)(-1,20){}
        \drawedge(4,13){} \drawedge(13,14){} \drawedge(14,15){}

        \node(16)(-6,21){} \node(17)(-11,16){} \node(18)(-16,11){} 
        \drawedge(6,16){} \drawedge(16,17){} \drawedge(17,18){}
      \end{picture}
    }
    \caption{A run tree}
    \label{fig:run-tree}
  \end{figure}

  We next define, for two trees $s$ and $t$, the tree $s+t$ by adding
  a new $\$$-labeled root and considering $s$ as left subtree of $s+t$
  and $t$ as right subtree. More formally,
  $\dom(s+t)=\{\varepsilon\}\cup 0\dom(s)\cup 1\dom(t)$,
  $(s+t)(\varepsilon)=\$$, $(s+t)(0u)=s(u)$ for $u\in\dom(s)$, and
  $(s+t)(1v)=t(v)$ for $v\in\dom(t)$. Since we consider words as
  special trees, we will meet trees of the form $w+t$. These trees
  carry the sequences $\$w$ on $\dom(w+t)\cap 0^*$ and satisfy
  $(w+t)\restriction_1\cong t$.

  We now define the language $L_\cA$ by
  \[
    L_\cA = \{t\mid t\text{ is a run tree}\}
            \cup \{w\$+t\mid w\in\Sigma^+,t\in D\}
  \]
  where $D$ is the set of trees from page~\pageref{page-definition-D}
  that satisfies $(D;\le_\trees)\cong\delta$. This language is clearly
  regular.

  Note that trees from $L_\cA$ use the alphabet $\Sigma\cup\{\$\}$
  that we order arbitrarily. We will now prove
  \[
     (L_\cA;\le_\trees)\cong\cL_{||\cA||}\,.
  \]

  First let $w\in\Sigma^+$ and $n\in\bN$. Then let $\cI^1_{w,n}$
  denote the restriction of $(L_\cA;\le_\trees)$ to all run trees $t$
  with
  \begin{equation}
    \word(t)=w
    \text{ and }n(t)=n\,.\label{eq:AAA}
  \end{equation}
  Note that for any two run trees $s$ and $t$ satisfying
  \eqref{eq:AAA}, we have $s\rest_{0^*}=t\rest_{0^*}$ and
  $s\rest_1=t\rest_1$. Hence $s\le_\trees t$ if and only if there
  exists $i\ge 1$ with $t\rest_{0^j1}=s\rest_{0^j1}$ for all $1\le
  j<i$ and $t\rest_{0^i1} <_\trees s\rest_{0^i1}$. By (T3),
  $\dom(t)\cap 0^+1$ contains at most $|w|$ elements. Furthermore note
  that the trees $t\rest_{0^i1}$ can be identified with natural
  numbers (namely with $|\dom(t)\cap 0^i10^*|$). This shows that
  $\cI^1_{w,n}$ can be embedded into $(\bN^{|w|};\le_\lex)$ and is
  therefore well-ordered and at most $\omega^{|w|}$.

  Now let $\rho=(q_0,a_1,q_1)(q_1,a_2,q_2)\dots(q_{k-1},a_{k},q_k)
  \in\Run(\cA,w)$ be a run of the weighted automaton~$\cA$ on the
  word~$w=a_1\dots a_k$. For any tuple~$(m_1,\dots,m_k)\in\bN^k$ such
  that
  \[
    m_i>0\ \Longrightarrow\ \mu(q_{i-1},a_{i},q_{i})=1\,,
  \]
  there exists a unique run tree~$t$ satisfying \eqref{eq:AAA} and
  $|\dom(t)\cap 0^i10^*|=m_i$ for all $1\le i\le k$. This gives an
  order-preserving embedding $f_\rho:\omega^{\wt(\rho)}\to
  \cI_{w,n}^1$, i.e., we showed
  $\omega^{\wt(\rho)}\le\cI_{w,n}^1$. Since this holds for abitrary
  runs $\rho\in\Run(\cA,w)$ and since $||\cA||(w)=\max\{\wt(\rho)\mid
  \rho\in\Run(\cA,w)\}$, we get $\omega^{||\cA||(w)}\le\cI_{w,n}^1$
  and therefore
  \[
  \omega^{||\cA||(w)+1}\le\cI_{w,n}^1\cdot\omega\,.
  \]
  By (T2), for every run tree $t$ satisfying \eqref{eq:AAA}, there
  exists at least one run $\rho\in\Run(\cA,w)$ such that $t$ is in the
  image of the embedding~$f_\rho$. Hence
  \[
  \cI_{w,n}^1\le\bigoplus_{\rho\in\Run(\cA,w)}\omega^{\wt(\rho)}
  \]
  where $\bigoplus$ denotes the natural sum of ordinals. We can
  conclude
  \begin{align*}
    \omega^{||\cA||(w)+1} \le \cI_{w,n}^1\cdot\omega
    &\le \left(\bigoplus_{\rho\in\Run(\cA,w)}\omega^{\wt(\rho)}\right)\cdot\omega\\
    &=\omega^{\max\{\wt(\rho)\mid\rho\in\Run(\cA,w)\}+1}\\
    &=\omega^{||\cA||(w)+1}
  \end{align*}
  and therefore
  \[
    \cI_{w,n}^1\cdot\omega = \omega^{||\cA||(w)+1}\,.
  \]

  Next consider the restriction $\cI_{w}^1$ of $(L_\cA;\le_\trees)$ to
  the set of run trees~$t$ with $\word(t)=w$. Then $n(s)<n(t)$ implies
  $s <_\trees t$. Furthermore, the restriction of $\cI_w^1$ to the set
  of run trees $t$ with $n(t)=n$ equals $\cI_{w,n}^1$. Hence
  \[
    \cI_w^1=\sum_{n\in(\bN;\le)}\cI_{w,n}^1 = 
    \cI_{w,0}^1\cdot\omega=\omega^{||\cA||(w)+1}\,.
  \]

  Next consider the restriction $\cI_w^2$ of $(L_\cA;\le_\trees^2)$ to
  the set of trees $w\$+D$. Then $\cI_w^2\cong\delta$ by what we saw
  on page~\pageref{page-definition-D}. Let $s$ be a run tree with
  $\word(s)=w$ and let $t\in w\$+D$. Then $s$ and $t$ coincide on
  $0^*$ (where they both carry the sequence $\$w\$$). Consider
  $s\rest_{10^*}$ and $t\rest_{10^*}$. Since $s$ is a run tree, we
  have $\dom(s)\cap 10^*=\{1,10\}$ while $t\rest_1\in D$ implies
  $\dom(t)\cap 10^*=\{1,10,100\}$. Hence $s\rest_1 <_\trees
  t\rest_1$ and therefore $s <_\trees t$. Hence, the restriction
  $\cI_w$ of $(L_\cA;\le_\trees)$ to the set of run trees $t$ with
  $\word(t)=w$ and the set of trees $w\$+ D$ satisfies
  \[
    \cI_w= \cI_w^1+\cI_w^2\cong\omega^{||\cA||(w)+1}+\delta\,.
  \]

  Finally, let $u,v\in\Sigma^+$. Then $u\le_\llex v$ if and only if
  $u\le_\trees v$. This implies
  \[
    (L_\cA;\le_\trees) = \sum_{w\in(\Sigma^+;\le_\llex)}\cI_w
      \cong\sum_{w\in(\Sigma^+;\le_\llex)}\omega^{||\cA||(w)+1}+\delta
      =\cL_\cA\,.
  \]
\end{proof}

\subsection{Automorphisms}

From Theorem~\ref{T-rigid-automatic}, we already know that the
existence of a non-trivial automorphism of a word automatic and
scattered linear order is $\Sigma^0_1$-hard. Here, we push this lower
bound one level higher for tree automatic scattered linear orders. The
order theoretic construction resembles that from
Section~\ref{S-word-rigid}, but also uses ideas from the previous
section.

Let $\cM$ be a Minsky machine, let $\cA_\cM$ and $\cB_\cM$ be the
weighted automata and, for $m\in\bN$, let $r_{\cM,m}$ be the function
defined following Theorem~\ref{T-Krob}. Then we define the linear
order
\[
  \cL_\cM=\sum_{m\in(\bN;\le)}
     \left(
       \cL_{||\cA_\cM||}\cdot\omega^*+
       \cL_{r_{\cM,m}}\cdot\omega
     \right)\,.
\]

\begin{lemma}\label{L-tree-automatic-2}
  From a Minsky machine $\cM$, one can construct a tree automatic
  presentation of the linear order~$\cL_\cM$.
\end{lemma}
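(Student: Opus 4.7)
The plan is to combine, by a tagging construction, two tree automatic presentations provided by Lemma~\ref{L-weighted-automata}. First I would apply Lemma~\ref{L-weighted-automata} to the weighted automaton $\cA_\cM$ to obtain a regular tree language $L_0$ with $(L_0;\le_\trees)\cong\cL_{||\cA_\cM||}$. Second I would apply the same lemma to $\cB_\cM$, viewed as a weighted automaton over the alphabet $\Sigma_\#^2$, obtaining a regular tree language $L_1^{\mathrm{full}}$ with $(L_1^{\mathrm{full}};\le_\trees)\cong\cL_{||\cB_\cM||}$, and then restrict to the subset $L_1\subseteq L_1^{\mathrm{full}}$ of those trees whose main branch spells a word of the form $u\otimes\$\Box(a\Box)^m$ with $u\in\Sigma^+$ and $m\in\bN$. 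Since this restriction on the main branch is recognisable by a tree automaton, $L_1$ is regular. Note that for every fixed $m$, the restriction of $(L_1;\le_\trees)$ to trees whose main branch carries this $m$ is isomorphic to $\cL_{r_{\cM,m}}$, because $||\cB_\cM||(u\otimes\$\Box(a\Box)^m)=r_{\cM,m}(u)$ and $\le_\llex$ on $u$ is preserved.

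Next I would encode each element of $\cL_\cM$ as a tagged tree. A tag is a triple $(b,m,n)$ with $b\in\{0,1\}$ and $m,n\in\bN$, which I would realise as two unary chains of appropriate length carrying a bit $b$, hanging as a side tree off a fresh root. The body is attached as a further side tree of the same root. A tree in the final language $L$ then has one of two shapes: either its tag is $(0,m,n)$ and its body is some $s\in L_0$, or its tag is $(1,m,n)$ and its body is some $s'\in L_1$ whose main branch carries exactly the value $m$ appearing in the tag. This consistency condition on the pair (tag, body) is regular, so the resulting language $L$ is regular.

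The linear order $\preceq$ on $L$ would be the lexicographic combination of four atomic comparisons. Given two trees in $L$ with parameters $(b_\sigma,m_\sigma,n_\sigma,s_\sigma)$ and $(b_\tau,m_\tau,n_\tau,s_\tau)$, one compares $m_\sigma$ with $m_\tau$ in the natural order on $\bN$, then the bits $b_\sigma$ and $b_\tau$ with $0<1$, then $n_\sigma$ with $n_\tau$ (in reverse order if $b_\sigma=b_\tau=0$ and in natural order if $b_\sigma=b_\tau=1$), and finally $s_\sigma$ with $s_\tau$ via $\le_\trees$. Each atomic comparison is tree-automatic -- integer comparisons on unary encodings, bit comparisons, and $\le_\trees$ are all automatic -- and the Boolean combination is therefore also tree-automatic.

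A routine trace then shows $(L;\preceq)\cong\cL_\cM$: blocks are indexed by $m$ via the primary comparison; the two sub-blocks within each block are separated by $b$; positions within the two sub-blocks are ordered as $\omega^*$ (reverse $n$, for $b=0$) and as $\omega$ (natural $n$, for $b=1$); and within each copy, the body comparison by $\le_\trees$ yields $\cL_{||\cA_\cM||}$ for $b=0$ and the restricted $\cL_{r_{\cM,m}}$ for $b=1$. The main technical hurdle will be the disciplined encoding of the tag--body consistency in the $b=1$ case and the bookkeeping for the tree encodings of tags; once these are handled, regularity of $L$ and automaticity of $\preceq$ reduce to standard tree-automaton constructions applied on top of Lemma~\ref{L-weighted-automata}.
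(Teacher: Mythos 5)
Your proposal follows essentially the same route as the paper: there, $L_\cM$ is likewise assembled from the two languages $L_{\cA_\cM}$ and $L_{\cB_\cM}$ of Lemma~\ref{L-weighted-automata}, with the tags realised by convolution with $\$^k$ and $\$\Box(a\Box)^m$ instead of side trees, the disjointness of the alphabets $\Sigma$ and $\Sigma_\#^2$ playing the role of your bit $b$, and the order $\preceq$ given lexicographically by $m$, then by which of the two languages the body lies in, then by $k$ (reversed on the $\cA_\cM$-side), then by $\le_\trees$. The one step you treat no more carefully than the paper does is the identification of the fixed-$m$ slice of the $\cB_\cM$-part with $\cL_{r_{\cM,m}}$ (your parenthetical ``$\le_\llex$ on $u$ is preserved'' is not literally true for $|u|<2m+2$, where all convolutions $u\otimes\$\Box(a\Box)^m$ have the same length), but the paper glosses over exactly the same point.
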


\begin{proof}
  Let $\cM$ be a Minsky machine, let $\cA_\cM$ and $\cB_\cM$ be the
  weighted automata and let $r_{\cM,m}:\Sigma^+\to\bN$ be the function
  defined following Theorem~\ref{T-Krob}. Recall that the alphabet of
  $\cA_\cM$ is $\Sigma$ and that of $\cB_\cM$ is
  $\Sigma_{\#}^2$. Recall the notion of a run tree from the proof of
  Lemma~\ref{L-weighted-automata} that is based on a weighted
  automaton. In this proof, we will consider run trees with respect to
  the weighted automaton $\cA_\cM$ and with respect to the weighted
  automaton $\cB_\cM$. Now recall the definition of the language
  $L_{\cA_\cM}$ and $L_{\cB_\cM}$:
  \begin{eqnarray*}
    L_{\cA_\cM} &=& \{t\mid t\text{ is a run tree wrt.\ }\cA_\cM\}
            \cup \{w\$+t\mid w\in\Sigma^+,t\in D\}\\
    L_{\cB_\cM} &=& \{t\mid t\text{ is a run tree wrt.\ }\cB_\cM\}
            \cup \{w\$+t\mid w\in(\Sigma_{\#}^2)^+,t\in D\}
  \end{eqnarray*}
  Note that these two tree languages are disjoint since the alphabets
  $\Sigma$ and $\Sigma_{\#}^2$ are disjoint. Now define the language
  \begin{eqnarray*}
    L_\cM&=&(L_{\cA_\cM}\otimes\$^*\otimes\$\Box(a\Box)^*)\\
          &&\cup\,
          \{t\otimes\$^k\otimes\$\Box(a\Box)^m\mid
          \begin{array}[t]{l}
            m\in\bN, t\in L_{\cB_\cM},\text{ and }\\
            (t\text{ is a run tree }\Rightarrow
              \word(t)\in\Sigma^+\otimes\$\Box(a\Box)^m)\}\,.
          \end{array}
  \end{eqnarray*}
  The crucial point regarding the regularity of this set is the
  verification that a tree $s\otimes\$^k\otimes\$\Box(a\Box)^m$ with
  $t$ a run tree of $\cB_\cM$ belongs to the second set. But this is
  the case if $s\rest_{0^*}=\$\$\Box(a\Box)^m\$$, a property that a
  tree automaton can check easily.

  On this set, we define the following linear order $\preceq$:
  $(s\otimes\$^k\otimes \$\Box(a\Box)^m)\preceq
  (t\otimes\$^\ell\otimes \$\Box(a\Box)^n)$ if and only if one of the
  following hold
  \begin{enumerate}[(O1)]
  \item $m<n$ or
  \item $m=n$, $s\in L_{\cA_\cM}$, and $t\in L_{\cB_\cM}$, or
  \item $m=n$, $s,t\in L_{\cA_\cM}$, and $k>\ell$, or
  \item $m=n$, $s,t\in L_{\cA_\cM}$,  $k=\ell$, and $s\le_\trees t$, or
  \item $m=n$, $s,t\in L_{\cB_\cM}$, and $k<\ell$, or
  \item $m=n$, $s,t\in L_{\cB_\cM}$,  $k=\ell$, and $s\le_\trees t$.
  \end{enumerate}
  It is clear that this relation is automatic and it remains to be
  shown that $(L_\cM;\preceq)\cong\cL_\cM$.

  For $k,m\ge0$ let $\cI_{\cA,k,m}$ denote the restriction of
  $(L_\cM;\preceq)$ to the set $L_{\cA_\cM}\otimes\$^k\otimes
  \$\Box(a\Box)^m$. By (O4) and Lemma~\ref{L-weighted-automata}, we get
  \begin{equation}
    \label{eq:(I)}
    \cI_{\cA,k,m}\cong\cL_{||\cA_\cM||}\,.
  \end{equation}
  Next let $\cI_{\cA,m}$ denote the restriction of $(L_\cM;\preceq)$
  to the set $L_{\cA_\cM}\otimes\$^*\otimes \$\Box(a\Box)^m$. Then,
  (O3) and \eqref{eq:(I)} imply
  \begin{equation}
    \label{eq:(II)}
    \cI_{\cA,m}\cong\cL_{||\cA_\cM||}\cdot\omega^*\,.
  \end{equation}
 
  Now we consider the weighted automaton $\cB_\cM$: For $k,m\ge0$ let
  $\cI_{\cB,k,m}$ denote the restriction of $(L_\cM;\preceq)$ to the
  set of all trees $t\otimes\$^k\otimes \$\Box(a\Box)^m$ such that
  $t\in L_{\cB_\cM}$ and, if $t$ is a run tree of $\cB_\cM$, then
  $\word(t)\in\Sigma^+\otimes \$\Box(a\Box)^m$. By (O6),
  $\cI_{\cB,k,m}$ is a restriction of
  $(L_{\cB_\cM};\le_\trees)\cong\cL_{\cB_\cM}$. Using the arguments
  from the proof of Lemma~\ref{L-weighted-automata} again, we obtain
  \[
    \cI_{\cB,k,m}\cong\cL_{r_{\cM,m}}\,.
  \]
  Together with (O5), this implies
  \begin{equation}
    \label{eq:(IV)}
    \cI_{\cB,m}\cong\cL_{r_{\cM,m}}\cdot\omega
  \end{equation}
  where $\cI_{\cB,m}$ is the restriction of $(L_\cM;\preceq)$ to the
  set of all trees $t\otimes\$^k\otimes \$\Box(a\Box)^m$ such that
  $k\ge0$ is arbitrary, $t\in L_{\cB_\cM}$ and, if $t$ is a run tree
  of $\cB_\cM$, then $\word(t)\in\Sigma^+\otimes \$\Box(a\Box)^m$.

  Now, from (O2), \eqref{eq:(II)} and \eqref{eq:(IV)}, we obtain that
  the restriction of $(L_\cM;\preceq)$ to the set of trees that define
  $\cI_{\cA,m}$ and $\cI_{\cB,m}$ is isomorphic to
  \[
    \cL_{||\cA_\cM||}\cdot\omega^*+\cL_{r_{\cM,m}}\cdot\omega\,.
  \]
  Finally, (O1) implies
  \begin{eqnarray*}
     (L_\cM;\preceq)
       &\cong&\sum_{m\in(\bN;\le)}   
          \cL_{||\cA_\cM||}\cdot\omega^*+\cL_{r_{\cM,m}}\cdot\omega\\
       &=& \cL_\cM\,.
  \end{eqnarray*}
\end{proof}

\begin{theorem}\label{T-rigid-tree-automatic}
  \begin{enumerate}[(i)]
  \item The set of tree automatic presentations of rigid (rigid and
    scattered, resp.) linear orders is $\Pi^0_2$-hard.
  \item The set of tree automatic presentations of automatically rigid
    linear orders is $\Pi^0_1$-complete.
  \end{enumerate}
\end{theorem}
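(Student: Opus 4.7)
The plan is to use Lemma~\ref{L-tree-automatic-2} to construct, from any Minsky machine $\cM$, a tree automatic presentation of $\cL_\cM$ (which is manifestly scattered, being built from $\delta$, ordinals $\omega^k$, $\omega$ and $\omega^*$), and then to show that $\cL_\cM$ is rigid if and only if every $m\in\bN$ is accepted by~$\cM$. Since the set of such Minsky machines is $\Pi^0_2$-complete, this proves both assertions of part~(i). By Proposition~\ref{P-weighted-automata} and \eqref{eq:iso1}, the rigidity of $\cL_\cM$ is equivalent to the statement that no $m$ satisfies $\cL_{||\cA_\cM||}\cong\cL_{r_{\cM,m}}$, so my task is to prove: $\cL_\cM$ has a non-trivial automorphism iff some~$m$ satisfies this isomorphism. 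The ``if'' direction is immediate: such an $m$ makes the block $B_m=\cL_{||\cA_\cM||}\cdot\omega^*+\cL_{r_{\cM,m}}\cdot\omega$ isomorphic to $\cL_{||\cA_\cM||}\cdot\mathbb Z$ (using $\omega^*+\omega\cong\mathbb Z$), which carries the obvious non-trivial shift automorphism, and extending by the identity on the other blocks gives a non-trivial automorphism of $\cL_\cM$.

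For the converse I would establish three lemmas. (a) Every $\cL_r$ is rigid: the $\delta$-delimiters together with the well-ordered initial parts $\omega^{r(w)+1}$ of each summand force a canonical decomposition into an $\omega$-indexed sequence of rigid pieces which must be preserved by any automorphism. (b) If $X$ and $Y$ are rigid and non-isomorphic, then $X\cdot\omega^*+Y\cdot\omega$ is rigid: the block type is an automorphism invariant, so the $X$-copies form a rigid $\omega^*$-sequence and the $Y$-copies a rigid $\omega$-sequence, and rigidity of the pieces finishes the argument. Combining~(a) and~(b) with $X=\cL_{||\cA_\cM||}$ and $Y=\cL_{r_{\cM,m}}$ shows each $B_m$ is rigid under the assumption $\cL_{||\cA_\cM||}\not\cong\cL_{r_{\cM,m}}$. (c) Any automorphism of $\cL_\cM=\sum_{m\in\omega}B_m$ must respect the $\omega$-partition into blocks~$B_m$, whence rigidity of all blocks forces rigidity of $\cL_\cM$.

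The main obstacle is~(c). I would prove it by identifying the inter-block boundaries of $\cL_\cM$ via an automorphism-invariant order-theoretic characterisation. Between $B_m$ and $B_{m+1}$ there is a Dedekind gap whose left side is cofinal via copies of $\cL_{r_{\cM,m}}$ (a $\cdot\omega$-structure) and whose right side is coinitial via copies of $\cL_{||\cA_\cM||}$ (a $\cdot\omega^*$-structure), whereas every intra-block transition inside a $B_m$ involves at least one achieved endpoint (the min of the first $\cL_{r_{\cM,m}}$-copy, or the min of some $\omega^{r(w)+1}$). The trickiest point is that $\cL_r$ internally also has Dedekind gaps (at every $\omega^{r(w)+1}\!\to\!\delta$ transition), so one has to use a finer invariant---a cofinality/coinitiality profile or an iterated condensation---to distinguish the two gap types. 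Once done, the ordered set of inter-block gaps is isomorphic to $\omega$ and hence rigid, so the blocks are pinned in place.

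For part~(ii), the $\Pi^0_1$ upper bound follows by enumerating tree automata $\cA$ and using the decidability of first-order theories of tree automatic structures \cite{KhoN95} to test whether $L(\cA)$ encodes a non-trivial automorphism. The $\Pi^0_1$-hardness is immediate from Corollary~\ref{C-rigid-automatic}(ii): every word automatic presentation is also tree automatic, and by Lemma~\ref{L-word-automatic} the witnessing automorphism in the word automatic reduction is already word automatic and hence tree automatic, so the word automatic reduction transfers verbatim.
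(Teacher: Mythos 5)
Your proposal takes exactly the paper's route: the same reduction via $\cL_\cM$ and Lemma~\ref{L-tree-automatic-2}, the same ``rigid iff every block $B(m)$ is rigid iff $\cL_{||\cA_\cM||}\not\cong\cL_{r_{\cM,m}}$ for every $m$ iff $\cM$ accepts every $m$'' chain for part~(i), and the same two observations for part~(ii). The paper is in fact far terser than you are --- it simply asserts the block analysis ``as in Section~\ref{S-word-rigid}'' --- so the extra detail you supply in (a) and (c) is welcome.

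One concrete caution about your auxiliary lemma~(b): as stated, for \emph{arbitrary} rigid non-isomorphic $X$ and $Y$, it is false. Take $X=\underline 1$ and $Y=\underline 2$; then $X\cdot\omega^*+Y\cdot\omega\cong\omega^*+\omega\cong(\mathbb Z;\le)$, which is not rigid. The step that fails is precisely your phrase ``the block type is an automorphism invariant'': in general an automorphism need not respect the decomposition into $X$- and $Y$-copies at all. The claim does hold for $X=\cL_{||\cA_\cM||}$ and $Y=\cL_{r_{\cM,m}}$, but only because of their internal structure (each copy begins with an achieved minimum, is punctuated by $\delta$-delimiters, and the cuts between consecutive copies are of a different type from the cuts between the $\omega^{r(w)+1}$- and $\delta$-pieces inside a copy). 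In other words, the gap/endpoint analysis you describe in~(c) for separating the blocks $B_m$ from one another must also be carried out \emph{inside} each $B_m$ to justify~(b); once you phrase (b) with that extra hypothesis (or just prove it directly for these two particular orders), the argument is complete and matches the paper's intended proof.
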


\begin{proof}
  \begin{enumerate}[(i)]
  \item As in Section~\ref{S-word-rigid}, the linear order $\cL_\cM$
    is rigid if and only if the block
    \[
    B(m)= \cL_{||\cA_\cM||}\cdot\omega^*+\cL_{r_{\cM,m}}\cdot\omega
    \]
    is rigid for every $m$. But the linear order $B(m)$ is rigid if
    and only if $\cL_{||\cA_\cM||}\not\cong\cL_{r_{\cM,m}}$ if and
    only if $m$ is accepted by $\cM$ (by Theorem~\ref{T-Krob}). In
    summary, $\cL_\cM$ is rigid if and only if the Minsky
    machine~$\cM$ accepts all numbers, a $\Pi^0_2$-complete
    problem. Hence Lemma~\ref{L-tree-automatic-2} reduces a
    $\Pi^0_2$-complete problem to the set of tree-automatic
    presentations of rigid (rigid and scattered, resp.)  linear
    orders.
  \item Hardness follows from Corollary~\ref{C-rigid-automatic}(ii),
    containment in $\Pi^0_1$ can be shown as in the proof of
    Corollary~\ref{C-rigid-automatic}(ii).
  \end{enumerate}
\end{proof}

\section{Open questions}

The isomorphism and rigidity problems for word automatic scattered
linear orders both belong to $\Delta^0_\omega$ (cf.~\cite{KusLL11}),
our lower bound $\Pi^0_1$ for the rigidity problem leaves quite some
room for improvements. Since the rank of a tree automatic linear order
is properly below $\omega^\omega$~\cite{JaiKSS12,Hus12}, the proof of
\cite{KusLL11} can be adapted to show that the isomorphism and the
rigidity problems for tree automatic scattered linear orders both
belong to $\Sigma^0_{\omega^\omega}$. But we only have the lower
bounds $\Pi^0_1$ and $\Pi^0_2$, resp. Finally, the rigidity problem
for arbitrary word or tree automatic linear orders is in $\Pi^1_1$,
but also here, we only have the arithmetic lower bound $\Pi^0_1$ and
$\Pi^0_2$, resp.

But the most pressing open question is the isomorphism problem of
scattered and word automatic linear orders.

\end{document}